%% file: Template_revised.tex
\documentclass[12pt]{article}
\usepackage{adjustbox}
\usepackage{amsfonts}
\usepackage{amsthm,amsmath,amssymb}
\usepackage[UKenglish]{babel}
\usepackage{bibunits}
\usepackage{bigints}
\usepackage{bm}
\usepackage{booktabs}
\usepackage{float}
\usepackage{graphicx}
\usepackage{geometry}
\usepackage{listings}
\usepackage{multirow}
\usepackage{siunitx}
\usepackage{tikz}
\usetikzlibrary{calc}
\usepackage{subcaption}

\newcommand{\balpha}{\bm{\alpha}}
\newcommand{\bbeta}{\bm{\beta}}

\newcommand{\balphau}{\bm{\alpha}_u}

\newcommand{\bD}{\mathcal{D}}
\newcommand{\bH}{\bm{H}}

\newcommand{\btheta}{\bm{\theta}}
\newcommand{\bX}{\bm{X}}
\newcommand{\bY}{\bm{Y}}

\newcommand{\bxo}{\bm{x}_{o}}

\newcommand{\bx}{\bm{x}}

\newcommand{\byi}{\bm{y}_{i}}
\newcommand{\bz}{\bm{z}}
\newcommand{\CRA}{\mathbb{A}_p}

\newcommand{\ebv}{\bm{\beta}_{e}}
\newcommand{\ev}{\tilde{E}^{(r)}}

\newcommand{\ismiss}{\mathbb{I}_{(k \in \mathcal{U}_i)} }
\newcommand{\isobs}{\mathbb{I}_{(k \in \mathcal{O}_i)} }

\newcommand{\mbmp}{\mathbb{B}_{p_{u}}}
\newcommand{\mbmpc}{\mathbb{B}^C_{p_{u}}}
\newcommand{\mbpi}{\mathbb{B}_{p_{u,i}}}
\newcommand{\mbmpci}{\mathbb{B}^C_{p_{u,i}}}

\newcommand{\us}{\mathcal{U}}

\newcommand{\simplex}{\mathbb{S}_p}
\newcommand{\bTD}{\mathcal{TD}}

\usepackage{xcolor}

\date{\today}

\usepackage{hyperref} %Must be added last
\graphicspath{{images/}}
\ifPDFTeX
  \usepackage[T1]{fontenc}
  \usepackage[utf8]{inputenc}
  \usepackage{textcomp} % provide euro and other symbols
\else % if luatex or xetex
  \usepackage{unicode-math}
  \defaultfontfeatures{Scale=MatchLowercase}
  \defaultfontfeatures[\rmfamily]{Ligatures=TeX,Scale=1}
\fi
\usepackage{lmodern}
\ifPDFTeX\else  
\fi
\IfFileExists{upquote.sty}{\usepackage{upquote}}{}
\makeatletter
\@ifundefined{KOMAClassName}{% if non-KOMA class
  \IfFileExists{parskip.sty}{%
    \usepackage{parskip}
  }{% else
    \setlength{\parindent}{0pt}
    \setlength{\parskip}{6pt plus 2pt minus 1pt}}
}{% if KOMA class
  \KOMAoptions{parskip=half}}
\makeatother
\usepackage{xcolor}
\makeatletter
\ifx\paragraph\undefined\else
  \let\oldparagraph\paragraph
  \renewcommand{\paragraph}{
    \@ifstar
      \xxxParagraphStar
      \xxxParagraphNoStar
  }
  \newcommand{\xxxParagraphStar}[1]{\oldparagraph*{#1}\mbox{}}
  \newcommand{\xxxParagraphNoStar}[1]{\oldparagraph{#1}\mbox{}}
\fi
\ifx\subparagraph\undefined\else
  \let\oldsubparagraph\subparagraph
  \renewcommand{\subparagraph}{
    \@ifstar
      \xxxSubParagraphStar
      \xxxSubParagraphNoStar
  }
  \newcommand{\xxxSubParagraphStar}[1]{\oldsubparagraph*{#1}\mbox{}}
  \newcommand{\xxxSubParagraphNoStar}[1]{\oldsubparagraph{#1}\mbox{}}
\fi
\makeatother

\usepackage{longtable,booktabs,array}
\usepackage{calc} % for calculating minipage widths
\usepackage{etoolbox}
\makeatletter
\patchcmd\longtable{\par}{\if@noskipsec\mbox{}\fi\par}{}{}
\makeatother
\IfFileExists{footnotehyper.sty}{\usepackage{footnotehyper}}{\usepackage{footnote}}
\makesavenoteenv{longtable}
\usepackage{graphicx}
\makeatletter
\def\maxwidth{\ifdim\Gin@nat@width>\linewidth\linewidth\else\Gin@nat@width\fi}
\def\maxheight{\ifdim\Gin@nat@height>\textheight\textheight\else\Gin@nat@height\fi}
\makeatother
\setkeys{Gin}{width=\maxwidth,height=\maxheight,keepaspectratio}
\makeatletter
\def\fps@figure{htbp}
\makeatother

\makeatletter
\@ifpackageloaded{caption}{}{\usepackage{caption}}
\AtBeginDocument{%
\ifdefined\contentsname
  \renewcommand*\contentsname{Table of contents}
\else
  \newcommand\contentsname{Table of contents}
\fi
\ifdefined\listfigurename
  \renewcommand*\listfigurename{List of Figures}
\else
  \newcommand\listfigurename{List of Figures}
\fi
\ifdefined\listtablename
  \renewcommand*\listtablename{List of Tables}
\else
  \newcommand\listtablename{List of Tables}
\fi
\ifdefined\figurename
  \renewcommand*\figurename{Figure}
\else
  \newcommand\figurename{Figure}
\fi
\ifdefined\tablename
  \renewcommand*\tablename{Table}
\else
  \newcommand\tablename{Table}
\fi
}
\@ifpackageloaded{float}{}{\usepackage{float}}
\floatstyle{ruled}
\@ifundefined{c@chapter}{\newfloat{codelisting}{h}{lop}}{\newfloat{codelisting}{h}{lop}[chapter]}
\floatname{codelisting}{Listing}

\makeatother
\makeatletter
\@ifpackageloaded{caption}{}{\usepackage{caption}}
\@ifpackageloaded{subcaption}{}{\usepackage{subcaption}}
\makeatother

\ifLuaTeX
  \usepackage{selnolig}  % disable illegal ligatures
\fi
\usepackage[]{natbib}
\usepackage{bookmark}

\IfFileExists{xurl.sty}{\usepackage{xurl}}{} % add URL line breaks if available
\hypersetup{
  pdftitle={Title},
  pdfauthor={Author 1; Author 2},
  pdfkeywords={3 to 6 keywords, that do not appear in the title},
  colorlinks=true,
  linkcolor={blue},
  filecolor={Maroon},
  citecolor={magenta},
  urlcolor={cyan},
  pdfcreator={LaTeX via pandoc}}

\newcommand{\anon}{1}

\newtheorem{theorem}{Theorem}[section]
\begin{document}

\def\spacingset#1{\renewcommand{\baselinestretch} {#1}\small\normalsize} 
%\spacingset{1}

%%%%%%%%%%%%%%%%%%%%%%%%%%%%%%%%%%%%%%%%%%%%%%%%%%%%%%%%%%%%%%%%%%%%%%%%%%%%%%
\if1\anon
{
  \title{\bf Handling Missingness and Censoring in Dirichlet Models}
  \author{%
Pillay J.\\
\shortstack[c]{%
Department of Statistics,
University of Pretoria,\\
Pretoria, South Africa
}\\[1em]
Bekker A.\\
\shortstack[c]{%
Department of Statistics,
University of Pretoria\\
National Institute for Theoretical and
Computational Sciences \\ (NITheCS),
Pretoria Node,
Pretoria, South Africa
}\\[1em]
Tortora C.\\
\shortstack[c]{%
Department of Mathematics and Statistics,
San José State University,\\
San José, USA
}\\[1em]
Punzo A.\\
\shortstack[c]{%
Department of Economics and Business,
University of Catania,\\
Catania, Italy
}%
}
  \maketitle
  \vspace{-1em}
} \fi

\if0\anon
{
  \bigskip
  \bigskip
  \bigskip
  \begin{center}
    {\LARGE\bf Handling Missingness and Censoring in Dirichlet Models}
\end{center}
  \medskip
} \fi

\bigskip
\begin{abstract}
%The text of your abstract. 200 or fewer words.

Likelihood-based inference for compositional data generally requires fully observed compositions, hindering the direct treatment of missing or censored components on the simplex. 
In this paper, we develop an expectation–maximisation (EM)-type algorithm for maximum likelihood estimation of the Dirichlet parameters in the presence of missing and censored components under a unified coarsening framework. 
The Dirichlet distribution---the canonical probability model for compositional data, which plays a role analogous to that of the multivariate normal distribution for unconstrained multivariate data---provides the foundation for our methodology. 
Our methodology preserves the compositional structure of the data while simultaneously performing parameter estimation and model-based imputation.
We evaluate the performance of our estimators and imputations through a simulation study under increasingly complex coarsening mechanisms, including both missing and censored data. 
We compare our method with an existing model-based approach and a nonparametric alternative.
Finally, we illustrate the practical utility of our methodology using mercury speciation data, in which compositions are only partially observed because of detection limits and incomplete speciation. 
Our results indicate that the Dirichlet distribution provides a suitable model for these data and that our method yields imputations that better preserve the observed compositional structure than competing approaches.

\end{abstract}

\noindent%
{\it Keywords:} Censoring, Coarsening, Compositional data, EM algorithm, Missing values
%\vfill

%\newpage
\spacingset{1.8} % DON'T change the spacing!
\maketitle

\section{Introduction}
\label{sec1}

Compositional data are multivariate observations representing the relative proportions of parts of a whole. 
A compositional vector (or, simply, composition) consists of strictly positive components that carry only relative information and satisfy a constant-sum constraint, typically equal to 1 (proportions) or 100 (percentages). 
Examples of compositional data include mineral compositions, species abundances, gene expression ratios, and budget allocations. Consequently, such data lie in the simplex.
The constant-sum constraint induces dependence among the components: an increase in one component necessarily reduces the proportion available to the others. 
Therefore, correlations observed between compositional components should not be interpreted as evidence of causal association, but rather as a mathematical consequence of the constant-sum constraint.

Modelling such data without properly accounting for this constraint induces spurious correlations and misleading results. 
Analysts typically adopt one of two approaches: either transforming the data to meet the requirements of models defined on an unconstrained space, or focusing on models that operate directly on the simplex. 
The former is achieved through transformations such as the additive log-ratio, centred log-ratio, isometric log-ratio, and square-root transformations \citep{cell_proportions, microbiome, PM2_5}. 
However, this approach has several drawbacks. 
From a methodological perspective, sufficiently complex transformations may render the interpretation of the original variables unrecoverable. 
From an interpretative standpoint, models fitted on transformed data estimate parameters that are defined in the transformed space and cannot be directly interpreted on the simplex. 
Moreover, current model-based approaches generally do not provide an injective mapping of these parameters back onto the simplex. Consequently, such models can, at best, be interpreted in the transformed space \citep{logsticnormal_inflated, logisticnormal}. 

The Dirichlet distribution is supported directly on the simplex and accounts for the dependence structure inherent in compositional data. 
Its application has been successful in numerous fields, including social media analysis \citep{dirichlet_nested_fmm}, microbiology \citep{dirichlet_microbiome, dirichlet_fmm, dirichlet_zoid, dirichlet_pathology}, and the behavioural sciences \citep{dirichlet_emotion}. 
Parameter estimation for the Dirichlet distribution is typically carried out by maximum likelihood (ML), which provides a natural and principled framework due to the exponential family structure of the model and its well-known asymptotic properties \citep{minka2000estimating, ronning1989maximum}. 
Accordingly, we adopt a likelihood-based approach in this paper.
However, compositions with unobserved components, which are common in practice \citep{geo_mean,alr_em_method,templ2011robcompositions}, cannot readily be incorporated into the ML estimation of the Dirichlet parameters. 
To accommodate the broad range of real-world situations characterised by different types and degrees of unobservedness, it is convenient to frame such components within a coarsening framework in the sense of \citet{heitjan1991ignorability}. 
Such coarsening may arise from various sources, including technical failures in recording devices, respondent non-compliance, incomplete data records, detection limits of measurement instruments, or imprecise self-reporting in surveys. 
Under this framework, instead of observing the exact value of a component, one observes that it belongs to a feasible set. 

A particularly important form of coarsening is censoring, whereby the true value is known to lie within a specific interval but is not observed exactly.
Censored components can be classified as left-censored (value below a threshold), right-censored (value above a threshold), or interval-censored (value lying within a known interval); in all cases, partial information about the underlying value is retained. 
Missing components can be viewed as an extreme form of censoring and are therefore included in our framework, with the feasible set coinciding with the entire support of the variable. 
In this case, no direct information about the component is available beyond that implied by the compositional constraint and the observed components.

The coarsening mechanism plays a crucial role in statistical inference. 
Within the general coarsening framework of \citet{heitjan1991ignorability}, one distinguishes between coarsening completely at random (CCAR), coarsening at random (CAR), and coarsening not at random (CNAR).  
Under CCAR, the coarsening mechanism is independent of the underlying true values. Under CAR, the mechanism may depend on the observed coarsened data but not on the unobserved exact values within the coarsened sets; in this case, the coarsening mechanism is typically ignorable for likelihood-based inference. Finally, under CNAR, the mechanism depends on the unobserved values themselves and must therefore be modeled explicitly.
This framework encompasses classical missing-data formulations as a special case. 
In particular, the standard distinctions---see, for example, \citet{missingness}---between missing completely at random (MCAR), missing at random (MAR), and missing not at random (MNAR) correspond, respectively, to CCAR, CAR, and CNAR when missingness is viewed as a form of coarsening \citep{little2021missing}. 

%Among the various forms of coarsening, we focus on censoring and missing data, which we collectively refer to as unobserved components.

%Within this framework, inference is based on the observed coarsened likelihood, and the coarsening mechanism can be treated as ignorable under a coarsening-at-random (CAR) assumption \citep{heitjan1991ignorability}. MAR is a special case of CAR \citep{little2021missing}.

The occurrence of unobserved components is sufficiently frequent to warrant the development of various techniques for handling incomplete datasets. Typical pre-processing approaches are broadly classified into three categories: deletion, imputation, and substitution. Deletion, which is valid under a CCAR mechanism, discards the entire row (composition) of the data matrix if it is not fully observed. 
It excludes potentially meaningful information, thereby weakening the representativeness of the data \citep{food_missing}. Imputation, which typically relies on CCAR or CAR assumptions, is the process of replacing unobserved components with plausible values, and can be further categorised into parametric and non-parametric approaches. One of the longstanding methods of the latter type is based on the geometric mean and is known as multiplicative replacement \citep{geo_mean}. 
Other univariate methods include $k$-nearest neighbours ($k$NN), least squares, and trimmed least squares imputation \citep{knn, svd}. The popularity of non-parametric methods stems from their convenience, as they require minimal assumptions for implementation. However, these approaches are typically designed for missing values and do not explicitly account for censoring mechanisms.

A particularly popular model-based imputation strategy involves transforming the data to a space in which algorithms for fitting probability distributions that accommodate incomplete observations are available \citep{alr_em_method}. 
Imputations obtained in the transformed space must then be mapped back to the simplex; this inversion is often challenging and may amplify errors introduced during transformation. Moreover, censoring is not naturally accommodated within such transformation-based approaches. Finally, substitution, that is, the practice of replacing unobserved entries with fixed values—most commonly zeros—is a widely used approach in compositional data analysis, particularly in the presence of detection limits.

All three approaches may distort the underlying distribution of the data if their implicit assumptions are violated, leading to biased parameter estimates and reduced interpretability (see \citealp{review} for an overview of the literature). To address these limitations, this paper proposes an algorithm that performs simultaneous ML estimation of the Dirichlet parameters and imputation of CAR unobserved values.

Handling unobserved values on the simplex, as proposed, has important advantages:
\begin{enumerate}

\item The algorithm enables the fitting of a Dirichlet distribution when the dataset is not fully observed, requiring only a single observed component per row.

\item The approach accommodates both missing and censored observations—collectively referred to as unobserved components—within a unified modelling framework.

\item An Expectation--Maximisation (EM) type algorithm is developed to obtain ML estimates of the Dirichlet parameters in the presence of unobserved CAR entries. 
The complete-data and observed-data likelihoods are derived, and an efficient update procedure is obtained, yielding a computationally attractive framework well suited to high-dimensional settings.

\item The method performs parameter estimation and imputation jointly within a single likelihood-based framework, operating directly on the simplex without requiring transformations to the real space, thereby avoiding the additional errors and interpretational limitations associated with back-transformation.

\item An extensive simulation study evaluates the proposed method in comparison with existing approaches, assessing parameter recovery and  imputation accuracy across varying levels of missingness.

\item An application to a real dataset demonstrate improved interpretability and consistent inference on the original compositional variables.

\end{enumerate}

The rest of the paper is structured as follows. Section \ref{prelim} introduces the Dirichlet distribution and its theoretical properties, as well as the types of unobserved entries considered in this paper. 
The main contribution is in Section \ref{inference}, i.e., the algorithm for fitting the Dirichlet distribution to incomplete data. 
A practical application of the proposed method is shown in Section \ref{application}, followed by concluding remarks in Section \ref{conclusion}.  An extensive simulation study is shown in the Supplementary Material \ref{simulations}.

\section{Preliminaries}
\label{prelim}

This section introduces the theoretical framework underlying the proposed methodology. We first review the Dirichlet and truncated Dirichlet distributions (Sections~\ref{dirichlet_review} and~\ref{truncated_dirichlet_review}, respectively). 
We then formalise the treatment of missing and censored components within a unified coarsening framework (Section~\ref{subsec:Presence_of_unobserved_components}) and derive the key distributional results that form the basis of the proposed likelihood-based estimation procedure (Section~\ref{subsec:conditional_distribution_unobserved}).

Throughout this paper, for a given dimension $p \in \mathbb{N}_{+}$, we introduce and frequently use the following domains for compositional data analysis: the $p$-dimensional open unit simplex
   $ \mathbb{V}_p 
    = \left\{ \bx = [x_1,\ldots,x_p]^{\top} 
    : \text{$x_k \in (0,1)$, $k \in \{1,\ldots,p\}$, $\sum_{k=1}^p x_k < 1$} \right\},$ and the $p$-dimensional closed unit simplex
\begin{equation*}
    \mathbb{S}_p 
    = \left\{ \bx = [x_1,\dots,x_p]^{\top} 
    : \text{$x_k \in (0,1)$, $k \in \{1,\ldots,p\}$, $\sum_{k=1}^p x_k = 1$}
     \right\}.
\end{equation*}

\subsection{The Dirichlet distribution}
\label{dirichlet_review}

A random vector $\bX\in\mathbb{S}_p$ is said to follow a Dirichlet distribution with parameter vector $\balpha = [\alpha_1,\ldots,\alpha_p]^{\top} \in \mathbb{R}_+^p$, denoted by $\bX \sim \bD_{\mathbb{S}_p}(\balpha)$, if its probability density function (density) is
\begin{equation}
\label{pdf_dirichlet}
    f_{\bD_{\mathbb{S}_p}}(\bx;\balpha) 
    =
       \dfrac{\Gamma(\alpha_0)}
        {\displaystyle\prod_{k=1}^p \Gamma(\alpha_k)}
        \displaystyle\prod_{k=1}^p x_k^{\alpha_k-1},\qquad 
        \bx \in \mathbb{S}_p, 
\end{equation}
where $\alpha_0 = \|\balpha\|_1 = \displaystyle\sum_{k=1}^p \alpha_k$, and $\Gamma(\cdot)$ denotes the gamma function.
Since $\displaystyle\sum_{k=1}^p X_k = 1$, the last component of $\bX$---though, in principle, any component could be chosen---can be expressed as
$
X_p = 1 - \sum_{k=1}^{p-1} X_k.
$
Hence, $\bX \sim \bD_{\mathbb{S}_p}(\balpha)$ can equivalently be represented by its first $p-1$ components, say $\bX_{-p} = [X_1,\ldots,X_{p-1}]^{\top}\in\mathbb{V}_{p-1}$, whose density is
\begin{equation}
    f_{\bD_{\mathbb{V}_{p-1}}}(\bx_{-p};\balpha) 
    =
        \dfrac{\Gamma(\alpha_0)}
        {\displaystyle\prod_{k=1}^p \Gamma(\alpha_k)}
        \left(1 - \|\bx_{-p}\|_1 \right)^{\alpha_p-1}\displaystyle\prod_{k=1}^{p-1} x_k^{\alpha_k-1}
        , \qquad
        \bx_{-p} \in \mathbb{V}_{p-1}. 
        \label{eqref:Dirichlet2}
\end{equation}
In this case, we will write $\bX_{-p} \sim \bD_{\mathbb{V}_{p-1}}(\balpha)$.
Note that, although \eqref{pdf_dirichlet} and \eqref{eqref:Dirichlet2} represent
equivalent formulations of the same density, they are defined on different domains, namely $\mathbb{S}_p$ and $\mathbb{V}_{p-1}$,
respectively. 
See \citet[][Chapter~2]{ng2011dirichlet} for further details on the Dirichlet distribution.

\subsection{The truncated Dirichlet distribution}
\label{truncated_dirichlet_review}

Let $\CRA \subset \mathbb{S}_p$ be a measurable set with
$\Pr\left(\bY \in \CRA\right) > 0$, where $\bY \sim \bD_{\mathbb{S}_p} (\balpha)$.
A random vector $\bX$ is said to follow a truncated Dirichlet distribution on $\CRA$, with parameter vector $\balpha \in \mathbb{R}_+^p$,
denoted $\bX \sim \bTD_{\CRA}(\balpha)$, if $\bX \stackrel{d}{=} \bY \mid \{\bY \in \CRA\}.$ The density of $\bX \sim \bTD_{\CRA}(\balpha)$ is
\begin{align}
\label{truncated_pdf}
f_{\bTD_{\CRA}}(\bx;\balpha) 
=
\frac{f_{\bD_{\mathbb{S}_p}}(\bx;\balpha)}{F(\CRA; \balpha)},
%\, \mathbb{I}_{\CRA}(\bx),
%\quad \bx \in \mathbb{S}_p,
\qquad \bx \in \CRA,
\end{align}
where
$
F(\CRA; \balpha)
=
\Pr(\bY \in \CRA), \bY \sim \bD_{\mathbb{S}_p}(\balpha),
$
is the Dirichlet probability of the truncation region $\CRA$.
Throughout this paper, $\CRA$ is assumed to be defined by lower and/or upper bounds on the components of the random vector. Specifically,
\begin{equation}
    \CRA = \left\{ \bx \in \mathbb{S}_p 
: a_{kL} \le x_k \le a_{kU}, \; k = 1,\dots,p \right\} \nonumber = \left\{ \bx \in \mathbb{S}_p 
: \bm{a}_L \le \bx \le \bm{a}_U \right\}, \label{eq:Ap}
\end{equation}

where $\bm{a}_L = [a_{1L},\ldots,a_{pL}]^\top$ and 
$\bm{a}_U = [a_{1U},\ldots,a_{pU}]^\top$ denote the vectors of lower and upper bounds, respectively. 
The bounds satisfy $0 \le a_{kL} < a_{kU} \le 1$ for $k=1,\ldots,p$, and the inequalities are understood componentwise.
See \citet[][Chapter~7]{ng2011dirichlet} for further details on the truncated Dirichlet distribution.

\subsection{Presence of unobserved components}
\label{subsec:Presence_of_unobserved_components}

As discussed in Section \ref{sec1}, in compositional data analysis it is common to encounter compositions (i.e., rows of the data matrix) with unobserved components (i.e., entries of the data matrix). 
As in any incomplete-data setting, a composition is informative for inference only if at least one component is observed. 
However, unlike in general multivariate settings, the compositional constraint $\displaystyle\sum_{k=1}^p X_k = 1$ implies that whenever exactly one component is unobserved, its value is uniquely determined by the observed ones. 
Therefore, non-identifiability—and hence the need for statistical inference—arises only when at least two components are unobserved, a distinctive feature of the compositional framework. 
Consequently, meaningful inference in this setting requires compositions with at least four components, that is, $p \geq 3$.

Always as discussed in Section \ref{sec1}, unobserved components may be either completely unobserved (i.e., missing) or partially unobserved (i.e., censored, in the framework of this paper). 
Properly accounting for these forms of incompleteness is therefore essential when modeling Dirichlet-distributed data.
To this end, we partition the random vector $\bX$ into its unobserved and observed components,
$
\bX =
\begin{bmatrix}
\bX_u \\
\bX_o
\end{bmatrix},
$
where the superscripts $u$ and $o$ denote the unobserved and observed parts, respectively.
Let $\mathcal{U} \subseteq \{1,\dots,p\}$ denote the index set of unobserved components, with cardinality $p_u = |\mathcal{U}|$, and let $\mathcal{O} = \{1,\dots,p\} \setminus \mathcal{U}$ denote the index set of observed components, the complement set of $\mathcal{U}$, with cardinality $p_o = |\mathcal{O}|$. 
Accordingly, $\bX_u$ and $\bX_o$ denote the subvectors of $\bX$ formed by the components indexed by $\mathcal{U}$ and $\mathcal{O}$, respectively.

To operationally handle compositions with unobserved components, and to define the region $\bX_u$ belongs to, we introduce the set
\begin{equation}
\mathbb{B}_{p_u}
=
\left\{
\bx_u \in \mathbb{V}_{p_u}
:
b_{kL} \le x_k \le b_{kU} ,
\; k \in \mathcal{U}
\right\}, \label{eq:Bpu}
\end{equation}
where, in line with \eqref{eq:Ap}, $0 \le b_{kL}< b_{kU}\le 1$ denote the lower and upper bounds for the $k^{\text{th}}$ unobserved component.
According to this notation, a generic unobserved component $x_k$, $k \in \mathcal{U}$, can be represented in terms of the interval $[b_{kL}, b_{kU}]$: a missing component corresponds to $[0,1]$, a left-censored component to $[0, b_{kU}]$, a right-censored component to $[b_{kL}, 1]$, and an interval-censored component to $[b_{kL}, b_{kU}]$.
Thus, missing components can be viewed as a special case of censored components with $b_{kL}=0$ and $b_{kU}=1$.
In terms of available information, missing and censored components represent two distinct forms of incompleteness: in the former case, no information is available on the component, whereas in the latter, partial information is retained through known bounds.

To better understand the information provided by each type of unobserved component, \figurename~\ref{fig:example} shows, on the ternary diagram, four compositions in $\mathbb{S}_3$ exhibiting increasing levels of incompleteness and, correspondingly, decreasing levels of information.
\input Triangles.tex
The four compositions are
\[
\bx_1=
\begin{bmatrix}
0.2 \\ 0.4 \\ 0.4
\end{bmatrix},
\quad
\bx_2=
\begin{bmatrix}
0.2 \\ [0.2,0.4] \\ [0.3,0.5]
\end{bmatrix},
\quad
\bx_3=
\begin{bmatrix}
0.2 \\ [0.2,0.4] \\ [0,1]
\end{bmatrix},
\quad \text{and}\quad
\bx_4=
\begin{bmatrix}
0.2 \\ [0,1] \\ [0,1]
\end{bmatrix}.
\]
The first point $\bx_1$, shown in \figurename~\ref{fig:ex1}, is a fully observed composition, whose location in the simplex is uniquely identified; this represents the maximum amount of information that a composition can convey.
The second point $\bx_2$, shown in \figurename~\ref{fig:ex2}, is partially observed: the first component is known exactly, whereas the remaining two are interval-censored. In this case, the exact position of the composition cannot be determined; rather, the set of feasible values is restricted to the blue segment in the simplex, reflecting the uncertainty induced by censoring.
The third point $\bx_3$, shown in \figurename~\ref{fig:ex3}, corresponds to a composition in which one component is interval-censored and another is completely unobserved (i.e., missing). 
Compared to the previous case, the available information is further reduced, and the feasible set expands accordingly, covering a larger portion of the simplex compatible with both the censoring bounds and the compositional constraint.
Finally, the fourth point $\bx_4$, shown in \figurename~\ref{fig:ex4}, exhibits two completely unobserved (missing) components and only one observed value. 
In this situation, the uncertainty is maximal among the incomplete cases considered: the feasible region spans the largest admissible subset of the simplex consistent with the observed component and the unit-sum constraint.

These examples illustrate how different types and degrees of incompleteness translate into feasible regions of varying size on the simplex, and highlight the progressive loss of information as the number of unobserved components increases and interval-censored components are replaced by completely unobserved ones.
Consequently, the model in \eqref{pdf_dirichlet} must be adapted to account for the fact that, rather than observing a single point in $\mathbb{S}_p$, one observes a region of admissible values.

\subsection{Conditional distribution of unobserved components}
\label{subsec:conditional_distribution_unobserved}

Under the above formulation, the conditional distribution of the unobserved components $\bX_u$ given the observed ones $\bX_o$ admits a closed-form expression. Further, the marginal distribution of the observed parts is now also conditioned the information that the unobserved parts exist within a region, namely $\mbmp$. Theorems ~\ref{marginal_dist} and \ref{cond_dist} formalise these results and constitute key contributions, as they enable tractable likelihood-based inference in the presence of incomplete compositions (see Section~\ref{inference}).
%%%%%%%%%%%%%%%%%%
\begin{theorem}
\label{marginal_dist}
Let $\bX \sim \bD_{\mathbb{S}_p}(\balpha)$, and partition $\bX$ and $\balpha$ according to the index sets $\mathcal{U}$ and $\mathcal{O}$ as

$\bX =
\begin{bmatrix}
\bX_u \\
\bX_o
\end{bmatrix}
\quad\text{and}\quad
\balpha =
\begin{bmatrix}
\balpha_u \\
\balpha_o
\end{bmatrix}.$ Let $p_u = |\mathcal{U}|$ and $p_o = |\mathcal{O}|$. Define $\balpha_o^*
=
\begin{bmatrix}
\balpha_o \\
\|\balpha\|_1 - \|\balpha_o\|_1
\end{bmatrix}.$ Then the marginal density of the observed component $\bX_o$ is given by
\begin{equation}
\label{marginal}
f(\bxo;\balpha)
=
f_{ {\bD_{\mathbb{V}_{p_o-1}}} }(\bxo;\balpha_o^*)\;
F_{\bD_{\mathbb{S}_{p_u}}}\left(\mbmp;\balpha_u \mid \bxo\right),
\qquad \bxo \in \mathbb{V}_{p_o}, %\\
\end{equation}
where $\mbmp$ is defined as in \eqref{eq:Bpu} and
$F_{\bD_{\mathbb{S}_{p_u}}}\left(\mbmp;\balpha_u \mid \bxo\right)
=
\Pr\left(\bX_u \in \mbmp \,\middle|\, \bX_o=\bxo\right).$

% &=
% \Pr\left(\frac{\bX_u}{c(\bxo)} \in \mbmpc \,\middle|\, \blue{\bX_o=\bxo}\right).
%\end{equation*}
\end{theorem}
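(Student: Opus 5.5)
The approach is to interpret the left-hand side of \eqref{marginal} as the density of the observed part jointly with the coarsening event, namely $\frac{\partial}{\partial \bxo}\Pr(\bX_o\le \bxo,\ \bX_u\in\mbmp)$. This equals the joint Dirichlet density \eqref{pdf_dirichlet} integrated over the feasible region of the unobserved components. The proof then reduces to a change of variables that factors this integral into an aggregated Dirichlet density for $\bX_o$ times a conditional probability for the rescaled unobserved part.

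\textbf{Step 1: write the integral on a lower-dimensional simplex.} Fix $\bxo\in\mathbb{V}_{p_o}$ and set $c=c(\bxo)=1-\|\bxo\|_1\in(0,1)$. The unit-sum constraint forces $\bx_u$ to lie on the scaled simplex $\{\bx_u:\ x_k>0,\ \|\bx_u\|_1=c\}$. Parametrise this set by its first $p_u-1$ coordinates, so that the joint density takes the form \eqref{eqref:Dirichlet2}. Then
\[
f(\bxo;\balpha)=\frac{\Gamma(\alpha_0)}{\prod_{k=1}^p\Gamma(\alpha_k)}\prod_{k\in\mathcal{O}}x_k^{\alpha_k-1}\int_{\{\bx_u:\ \|\bx_u\|_1=c,\ \bx_u\in\mbmp\}}\ \prod_{k\in\mathcal{U}}x_k^{\alpha_k-1}\,d\bx_{u,-p_u}.
\]

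\textbf{Step 2: rescale the unobserved components.} Substitute $\bx_u=c\,\bz$ with $\bz\in\mathbb{S}_{p_u}$. The Jacobian on the $(p_u-1)$ free coordinates is $c^{p_u-1}$, and the integrand contributes $c^{\|\balpha_u\|_1-p_u}$. Together these give the factor $c^{\|\balpha_u\|_1-1}=(1-\|\bxo\|_1)^{\alpha_0-\|\balpha_o\|_1-1}$. Multiply and divide by $\Gamma(\|\balpha_u\|_1)$ and regroup the gamma functions. The prefactor becomes exactly $f_{\bD_{\mathbb{V}_{p_o-1}}}(\bxo;\balpha_o^*)$: the last component carries parameter $\|\balpha\|_1-\|\balpha_o\|_1$, and $\alpha_0$ is unchanged. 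The remaining integral is
\[
\int_{\{\bz\in\mathbb{S}_{p_u}:\ c\bz\in\mbmp\}} f_{\bD_{\mathbb{S}_{p_u}}}(\bz;\balpha_u)\,d\bz .
\]
This step is in effect a proof of the aggregation property, so no external result is needed for it.

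\textbf{Step 3: identify the remaining integral as a conditional probability.} With $\mbmp$ empty, i.e.\ all bounds equal to $[0,1]$, the integral is $1$. This shows that the prefactor alone is the marginal density of $\bX_o$. It also shows that, conditionally on $\bX_o=\bxo$, the vector $\bX_u/c(\bxo)$ has density $f_{\bD_{\mathbb{S}_{p_u}}}(\cdot;\balpha_u)$, which is the neutrality property of the Dirichlet. Hence the integral equals $\Pr(\bX_u\in\mbmp\mid\bX_o=\bxo)=F_{\bD_{\mathbb{S}_{p_u}}}(\mbmp;\balpha_u\mid\bxo)$, and this is \eqref{marginal}.

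\textbf{Main obstacle.} The delicate step is Step 2. One must keep the dimensions consistent, since the integral is over a $(p_u-1)$-dimensional face rather than over $\mathbb{R}^{p_u}$, and so obtain the correct Jacobian exponent. One must also check that the box $\mbmp$ maps to the rescaled box with bounds $b_{kL}/c\le z_k\le b_{kU}/c$, intersected with $\mathbb{S}_{p_u}$. I would verify the exponent bookkeeping first in the case $p_u=2$, where the integral is one-dimensional and a Beta density appears explicitly. I would then state the general case by induction, or through the same parametrisation.
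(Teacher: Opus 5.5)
Your proof is correct and follows the same route as the paper. Both integrate the joint Dirichlet density over the feasible region $\mbmp$ and factor the result as $f_{\bD_{\mathbb{V}_{p_o-1}}}(\bxo;\balpha_o^*)$ times $\Pr(\bX_u\in\mbmp\mid\bX_o=\bxo)$, with $\bX_u/c(\bxo)$ confined to the rescaled box. The one difference is that you derive the aggregation and rescaling facts yourself through the explicit $(p_u-1)$-dimensional change of variables with Jacobian $c^{p_u-1}$, whereas the paper writes joint density as marginal times conditional and cites Theorem~2.5 of \citet{ng2011dirichlet}. Also, in Step~3, ``$\mbmp$ empty'' should read ``$\mbmp$ imposing no restriction''.
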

%%%%%%%%%%%%%
\begin{proof}
To show the result in \eqref{marginal}, let us consider
\begin{align}
f(\bm{x}_o;\balpha) 
& = \int_{\mathbb{B}_{p_u}} f_{\mathcal{D}_{\mathbb{S}_p}}(\bm{x}; \bm{\alpha}) \, d\bm{x}_u 
\nonumber\\
%& = \int_{\mathbb{B}_{p_u}} f_{\mathcal{D}_{\mathbb{S}_p}}(\bm{x}_o, \bm{x}_u; \bm{\alpha}) \, d\bm{x}_u \nonumber\\
%& = \int_{\mathbb{B}_{p_u}} f(\bm{x}_u \mid \bm{x}_o)\, f(\bm{x}_o)\, d\bm{x}_u \nonumber\\
&= f_{_{\bD_{\mathbb{V}_{p_o-1}}}}(\bm{x}_o; \bm{\alpha}_o^*) 
\int_{\mathbb{B}_{p_u}} f(\bm{x}_u \mid \bm{x}_o)\, d\bm{x}_u \nonumber\\ 
&= f_{{\bD_{\mathbb{V}_{p_o-1}}}}(\bm{x}_o; \bm{\alpha}_o^*) \Pr\left(\bm{X}_u \in \mathbb{B}_{p_u} \mid \bm{x}_o\right). 
\label{density}
%&= f_\mathcal{D}(\bm{x}_o; \bm{\alpha}^*) 
%\; F(\bm{x}_u \in \mathbb{B}_{p_u} \mid \bm{x}_o)
\end{align}

Denote the closure factor associated with $\bxo$ as $c(\bxo)=1-\|\bxo\|_1$.
Then
\begin{align}
\Pr\left(\bm{X}_u \in \mathbb{B}_{p_u} \mid \bm{x}_o\right) 
%= \Pr\left(\bm{b}_L \le \bm{X}_u \le \bm{b}_U \mid \bm{x}_o\right)
= \Pr\left(
\frac{\bm{b}_L}{c(\bxo)}
\le 
\frac{\bm{X}_u}{c(\bxo)}
\le 
\frac{\bm{b}_U}{c(\bxo)}
\;\middle|\; \bm{x}_o
\right).
\label{prob_x}
\end{align}
To make the notation more concise, let $\bm{Y} = \bm{X}_u/c(\bxo)$.
% \begin{equation*}
%     \bm{Y} = \frac{\bm{X}_u}{1 - \|\bm{x}_o \|_1}.
% \end{equation*}
Further, define the following intervals of $\bm{Y}$ for all $k \in \mathcal{U}$ as
$ y_{kL} = \max\left\{0, \frac{b_{kL}}{c(\bxo)} \right\} \quad \text{and}\quad y_{kU} = \min\left\{1, \frac{b_{kU}}{c(\bxo)} \right\}.   $

By \citet[Theorem~2.5]{ng2011dirichlet}, $\bY \sim \bD_{\mathbb{S}_{p_u}} (\balphau)$. 
Therefore, \eqref{prob_x} can be rewritten in terms of $\bY$ as 
$\Pr\left(\bm{X}_u \in \mathbb{B}_{p_u} \mid \bm{x}_o\right) = \Pr\left(\bm{Y} \in \mbmpc \mid \bm{x}_o\right) \nonumber = F_{\bD_{\mathbb{S}_{p_u}}}(\mathbb{B}^C_{p_u}; \bm{\alpha}_u, \bm{x}_o).$ By substituting this expression into the probability into \eqref{density} we get
\begin{equation*}
f(\bm{x}_o; \bm{\alpha})
=
f_{\bD_{\mathbb{V}_{p_o-1}}}(\bm{x}_o; \bm{\alpha}_o^*) 
\, F_{\bD_{\mathbb{S}_{p_u}}}(\mathbb{B}^C_{p_u}; \bm{\alpha}_u, \bm{x}_o),
\qquad
\bm{x}_u \in \mathbb{B}_{p_u}. 
\end{equation*}
\end{proof}

\begin{theorem}
\label{cond_dist}
Let $\bX \sim \bD_{\mathbb{S}_p}(\balpha)$, and partition $\bX$ and $\balpha$ according to the index sets $\mathcal{U}$ and $\mathcal{O}$ as
$
\bX =
\begin{bmatrix}
\bX_u \\
\bX_o
\end{bmatrix}
\text{ and }
\balpha =
\begin{bmatrix}
\balpha_u \\
\balpha_o
\end{bmatrix}.
$
Let $p_u = |\mathcal{U}|$ and $p_o = |\mathcal{O}|$. Moreover, letting $c(\bxo)=1-\|\bxo\|_1>0$, and $\bX_u$ having a censored region $\mbmp$, we have:
\begin{equation}
\label{cond}
\frac{\bX_u}{c(\bxo)}
\;\bigg|\;
\bX_o=\bxo
\sim
\bTD_{\mbmpc}(\balpha_u),
\end{equation}
where 
$
\mbmpc
=
\left\{
\bz \in \mathbb{S}_{p_u}
:
\frac{b_{kL}}{c(\bxo)} \le z_k \le
\min\left\{1,\frac{b_{kU}}{c(\bxo)}\right\},
\; k \in \mathcal{U}
\right\}$ denotes the rescaled version of $\mbmp$ induced by the closure factor $c(\bxo)$.
\end{theorem}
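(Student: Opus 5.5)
The plan is a change of variables in the conditional density, combined with the standard complete-data conditional result already used in the proof of Theorem~\ref{marginal_dist}. Here ``$\bX_u$ having a censored region $\mbmp$'' is read as conditioning jointly on $\bX_o=\bxo$ and on the coarsening event $\{\bX_u\in\mbmp\}$. That is, we study the law of $\bX_u/c(\bxo)$ given $\{\bX_o=\bxo,\ \bX_u\in\mbmp\}$.

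\textbf{Step 1 (untruncated conditional).} From \eqref{pdf_dirichlet}, the joint density factorises in $\bx_u$ as $\prod_{k\in\us}x_k^{\alpha_k-1}$, with $\bx_u$ constrained to $\|\bx_u\|_1=c(\bxo)$. This is the same fact invoked via \citet[Theorem~2.5]{ng2011dirichlet} in the proof of Theorem~\ref{marginal_dist}. Setting $\bY=\bX_u/c(\bxo)$ and taking $p_u-1$ free coordinates, the Jacobian is $c(\bxo)^{p_u-1}$. Collecting powers of $c(\bxo)$ gives a density proportional to $\prod_{k\in\us}y_k^{\alpha_k-1}$ on $\mathbb{S}_{p_u}$. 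Normalisation then forces $\bY\mid\bX_o=\bxo\sim\bD_{\mathbb{S}_{p_u}}(\balphau)$.

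\textbf{Step 2 (map the coarsening event).} Since $c(\bxo)>0$, the event $\{\bX_u\in\mbmp\}$ coincides with $\{b_{kL}/c(\bxo)\le Y_k\le b_{kU}/c(\bxo),\ k\in\us\}$, as in \eqref{prob_x}. Every $Y_k\in(0,1)$ on $\mathbb{S}_{p_u}$, so the upper bound may be replaced by $\min\{1,b_{kU}/c(\bxo)\}$ without changing the event. The event is therefore exactly $\{\bY\in\mbmpc\}$. We assume $\Pr(\bY\in\mbmpc)>0$, i.e.\ the factor $F_{\bD_{\mathbb{S}_{p_u}}}(\mbmp;\balphau\mid\bxo)$ in \eqref{marginal} is positive. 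This is automatic for genuinely observed data.

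\textbf{Step 3 (conclude).} Conditioning a $\bD_{\mathbb{S}_{p_u}}(\balphau)$ vector on $\{\bY\in\mbmpc\}$ is, by the definition in Section~\ref{truncated_dirichlet_review}, the law $\bTD_{\mbmpc}(\balphau)$. Its density is \eqref{truncated_pdf} with $F(\mbmpc;\balphau)$ in the denominator, which also matches the normalising factor of Theorem~\ref{marginal_dist}. One check is needed: $\mbmpc$ is of the box-type form \eqref{eq:Ap}, with $a_{kL}=b_{kL}/c(\bxo)$ and $a_{kU}=\min\{1,b_{kU}/c(\bxo)\}$. If $b_{kL}/c(\bxo)\ge1$, the set is null, which is excluded by the positivity assumption.

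\textbf{Main obstacle.} The delicate step is Step 1's change of variables on a degenerate support. One must work with the $p_u-1$ free coordinates, i.e.\ the $\mathbb{V}_{p_u-1}$ representation \eqref{eqref:Dirichlet2}, and track the Jacobian $c(\bxo)^{p_u-1}$ against the powers $c(\bxo)^{\sum_{k\in\us}\alpha_k-1}$. The simplest route is to quote Step 1 from \citet{ng2011dirichlet}, as was already done for Theorem~\ref{marginal_dist}. The remaining steps are then a routine conditioning on an event of positive probability.
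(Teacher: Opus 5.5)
Your proposal is correct and follows the same route as the paper. The paper invokes \citet[Theorem~2.5]{ng2011dirichlet} to get $\bX_u/c(\bxo)\mid\bX_o=\bxo\sim\bD_{\mathbb{S}_{p_u}}(\balphau)$ and then applies the definition \eqref{truncated_pdf} of the truncated Dirichlet to the restricted support. Your version adds details the paper leaves implicit: the Jacobian bookkeeping, the identification of the censoring event $\{\bX_u\in\mbmp\}$ with $\{\bX_u/c(\bxo)\in\mbmpc\}$ (including why the $\min\{1,\cdot\}$ is harmless), and the positivity requirement on that event.
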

%%%%%%%%%%%%%
\begin{proof}

First, recognise that $\frac{\bm{X}_u}{c(\bxo)} \sim \bD_{\mathbb{S}_{p_u}} (\balphau)$. The proof of this result can be found in Theorem 2.5 of \citet[Section 2.2]{ng2011dirichlet}. All that is left is to recognise that the random variable's support is truncated, and thus, its distribution's support must also be truncated. Following Definition \ref{truncated_pdf}, it follows that $\frac{\bm{X}_u}{c(\bxo)} \sim \bTD_{\mbmpc}(\balphau)$, as required.
\end{proof}

Theorem \ref{cond_dist} is necessary for deriving closed-form expressions of expected values required for the E-step in new algorithm developed in Section \ref{inference}. Theorem \ref{marginal_dist} reveals that the marginal distribution of the observed parts are scaled versions of the marginal Dirichlet distributions. This fact is useful for assessing and tracking the convergence of the proposed algorithm in Section \ref{inference}.

\section{EM for Maximum Likelihood Estimation}
\label{inference}

Maximum likelihood (ML) estimation of the Dirichlet parameter vector $\balpha$ from a random sample
$\mathcal{X}=\{\bx_i: i=1,\dots,n\}$, with $\bx_i=[x_{i1},\dots,x_{ip}]^\top\in\mathbb{S}_p$, becomes nontrivial when some components of the compositions are censored or missing. 
In this setting, ML parameter estimation can be carried out using an expectation-maximisation (EM) type algorithm, which is well suited to likelihood maximisation in the presence of incomplete data \citep{em_book}.

\subsection{EM algorithm}
\label{ecm}

The EM algorithm treats the unobserved components as latent variables and
iteratively maximises the conditional expectation of the complete-data
log-likelihood. As introduced in
Section~\ref{subsec:Presence_of_unobserved_components}, each observation
$\bx_i$ is partitioned into unobserved and observed subvectors,
$
\bx_i =
\begin{bmatrix}
\bx_{u,i}\\
\bx_{o,i}
\end{bmatrix},
$
where the subscripts $u$ and $o$ refer to the unobserved and observed components, respectively. 
Let $\mathcal{U}_i \subseteq \{1,\dots,p\}$ denote the index set of unobserved components of $\bx_i$, with cardinality $p_{u,i}$, and let $\mathcal{O}_i = \{1,\dots,p\} \setminus \mathcal{U}_i$ denote the index set of observed components, the complement set of $\mathcal{U}_i$, with cardinality $p_{o,i}$. 
The unobserved subvector lies in a feasible region, denoted by $\mbpi$.

Under the assumed CAR mechanism (refer to Section~\ref{sec1}), the coarsening mechanism is ignorable for likelihood-based inference on $\balpha$. 
Hence, inference can be based on the
observed-data likelihood induced by the complete-data Dirichlet model. 
Based on \ref{pdf_dirichlet}, the corresponding complete-data likelihood is
\begin{align}
\label{likelihood}
\mathcal{L}_c(\balpha;\mathcal{X})
&
%=\prod_{i=1}^n f_{\bD_{\mathbb{S}_p}}(\bx_i;\balpha) 
%&= \prod_{i=1}^n
%\left\{
%\frac{\Gamma(\alpha_0)}
%{\displaystyle\prod_{k=1}^p\Gamma(\alpha_k)}
%\prod_{k=1}^p x_{ik}^{\alpha_k-1}
%\right\} \nonumber\\
=
\prod_{i=1}^n
\left\{
\frac{\Gamma(\alpha_0)}
{\displaystyle\prod_{k=1}^p\Gamma(\alpha_k)}
\left(\prod_{k\in\mathcal{U}_i}x_{ik}^{\alpha_k-1}\right)
\left(\prod_{k\in\mathcal{O}_i}x_{ik}^{\alpha_k-1}\right)
\right\}.
\end{align}
 % $\mathcal{U}_i\subseteq\{1,\dots,p\}$ and $\mathcal{O}_i=\{1,\dots,p\}\setminus\mathcal{U}_i$ denote the index sets of unobserved and observed components in $\bx_i$, respectively.
Taking logarithms yields the complete-data log-likelihood
\begin{align}
\label{ll}
\ell_c(\balpha;\mathcal{X})
%&=
%\sum_{i=1}^n
%\Bigg\{
%\ln\Gamma(\alpha_0)
%-
%\sum_{k=1}^p\ln\Gamma(\alpha_k)
%+
%\sum_{k\in\mathcal{U}_i}(\alpha_k-1)\ln x_{ik}
%+
%\sum_{k\in\mathcal{O}_i}(\alpha_k-1)\ln x_{ik}
%\Bigg\}
%\nonumber\\
&=
n\ln\Gamma(\alpha_0)
-
n\sum_{k=1}^p\ln\Gamma(\alpha_k)
+
\sum_{i=1}^n
\left(
\sum_{k\in\mathcal{U}_i}(\alpha_k-1)\ln x_{ik}
+
\sum_{k\in\mathcal{O}_i}(\alpha_k-1)\ln x_{ik}
\right).
\end{align}
The complete-data log-likelihood in \eqref{ll} depends on the unknown parameter vector $\balpha$ and also involves the unobserved components $\bx_{u,i}$, which must be imputed for $i=1,\dots,n$. 
The EM algorithm alternates between an E-step and an M-step until convergence to stable parameter estimates.
The steps corresponding to the $(r+1)^{\text{th}}$ iteration of the algorithm are described below.

\subsubsection{E-step}
\label{e_step}

The E-step computes the conditional expectation of the complete-data log-likelihood,
$
Q(\balpha)=
\mathbb{E}\!\left[
\ell_c(\balpha;\mathcal{X})
\mid
\balpha^{(r)},\mathcal{X}_o,\mathcal{B}
\right],
$ where $\balpha^{(r)}$ denotes the parameter estimate at the $r^{\text{th}}$ iteration, $\mathcal{X}_o=\{\bx_{o,i}:i=1,\dots,n\}$ is the observed data and $\mathcal{B}=\{\mbpi:i=1,\dots,n\}$ is the feasible regions of the unobserved components.
Using \eqref{ll}, we obtain

\begin{align}
\label{Q function}
Q(\balpha)
&=
n\ln\Gamma(\alpha_0)
-
n\sum_{k=1}^p\ln\Gamma(\alpha_k)
+
\sum_{i=1}^n\sum_{k\in\mathcal{U}_i}\alpha_k\,\ev_{ik}
+
\sum_{i=1}^n\sum_{k\in\mathcal{O}_i}\alpha_k\ln x_{ik}
\nonumber\\
&\quad
-
\sum_{i=1}^n\sum_{k\in\mathcal{U}_i}\ev_{ik}
-
\sum_{i=1}^n\sum_{k\in\mathcal{O}_i}\ln x_{ik},
\end{align}
where $\ev_{ik}=
\mathbb{E}\!\left[
\ln X_{u,ik}
\mid
\bx_{o,i},\balpha^{(r)},\mbpi
\right],~ k\in\mathcal{U}_i.$ From Theorem~\ref{marginal_dist}, $\frac{\bx_{u,i}}{1-\|\bx_{o,i}\|_1}\ 
\Bigg|\ 
\bx_{o,i}
\sim
\bTD_{\mbmpci}(\balphau^{(r)}).$ Therefore
\begin{align}
\label{elnx}
\mathbb{E}[\ln X_{u,ik}\mid\bx_{o,i},\mbpi]
&=
\ln(1-\|\bx_{o,i}\|_1)
+
\mathbb{E}\!\left[
\ln\left(
\frac{ X_{u,ik}}{1-\|\bx_{o,i}\|_1}
\right)
\Bigg|
\bx_{o,i},\mbpi
\right].
\end{align}

Let
\begin{equation}
\bY_i=
\frac{\bx_{u,i}}{1-\|\bx_{o,i}\|_1}
\quad\text{and}\quad
Y_{ik}=
\frac{ X_{u,ik}}{1-\|\bx_{o,i}\|_1}.
\label{eq:Yi}
\end{equation}
Using the density in \eqref{truncated_pdf} of the truncated Dirichlet distribution, we obtain
\begin{align*}
\mathbb{E}\!\left[\ln Y_{ik}\mid \mbmpci\right]
&= \int_{\mbmpci}\ln y_{ik}\,
\frac{f_{\bTD_{\mbmpci}}(\byi;\balphau)}{F(\mbmpci;\balphau)}\,d\byi \\
&= \frac{1}{F(\mbmpci;\balphau)}
\bigintssss_{\mbmpci} 
\ln y_{ik}\,
\frac{\Gamma(\|\balphau\|_1)}{\displaystyle\prod_{j\in\us_i}\Gamma(\alpha_j)}
\prod_{j\in\us_i} y_{ij}^{\alpha_j-1}\, d\byi \\
&= \frac{1}{F(\mbmpci;\balphau)}
\frac{\partial}{\partial \alpha_k}
\bigintss_{\mbmpci}
\frac{\Gamma(\|\balphau\|_1)}{\displaystyle\prod_{j\in\us_i}\Gamma(\alpha_j)}
\prod_{j\in\us_i} y_{ij}^{\alpha_j-1}\, d\byi \\
& = \frac{\partial}{\partial\alpha_k}
\ln F(\mbmpci;\balphau)
+ \psi(\alpha_k)-\psi(\|\balphau\|_1),
\end{align*}
where the third equality follows from the identity $\partial y_{ik}^{\alpha_k-1}/\partial\alpha_k
= y_{ik}^{\alpha_k-1}\ln y_{ik}$, which allows the logarithmic term to be written as a derivative with respect to $\alpha_k$, and $\psi(\cdot)$ denotes the digamma function \citep{digamma}.
Substituting the expression for $\mathbb{E}[\ln Y_{ik}\mid \mbmpci]$ into \eqref{elnx} yields
\begin{align}
\label{elnx_r}
\mathbb{E}\!\left[
\ln
\left(
\frac{ X_{u,ik}}{1-\|\bx_{o,i}\|_1}
\right)\ 
\Bigg|\ 
\bx_{o,i},\mbpi
\right]
=
\frac{\partial}{\partial\alpha_k}
\ln F(\mbmpci;\balphau)
+
\psi(\alpha_k)
-
\psi(\|\balphau\|_1).
\end{align}
Hence, at iteration $r$,
\begin{align}
\ev_{ik}
=
\ln(1-\|\bx_{o,i}\|_1)
+
\frac{\partial}{\partial\alpha_k}
\ln F(\mbmpci;\balphau^{(r)})
+
\psi(\alpha_k^{(r)})
-
\psi(\|\balphau^{(r)}\|_1).
\end{align}
It is worth noting that the only term requiring numerical evaluation is 
$\frac{\partial}{\partial\alpha_k}\ln F(\mbmpci;\balphau^{(r)})$, 
which depends on the normalising constant of the truncated Dirichlet distribution. 
The remaining terms in \eqref{elnx_r} are available in closed form through the digamma function.

As discussed in Section~\ref{sec1}, missing-at-random (MAR) data can be viewed as a special case of unobserved data. 
Under a MAR mechanism, the feasible region becomes
$
\mbpi=
\{0\le x_{ik}\le1:\ k\in\mathcal{U}_i\},
$ in which case $F(\mbmpci;\balphau)=1$ and therefore $\ln F(\mbmpci;\balphau)=0$ for all $\balphau\in\mathbb{R}_+^{p_u}$.

\subsubsection{M-step}
\label{mstep}

The $k^{\text{th}}$ element of the gradient of \eqref{Q function} is
\begin{align}
\label{grad}
\frac{\partial Q(\balpha)}{\partial\alpha_k}
=
n\psi(\alpha_0)-n\psi(\alpha_k)
+
\sum_{i=1}^n \mathbb{I}_{\{k\in\mathcal{U}_i\}}\ev_{ik}
+
\sum_{i=1}^n \mathbb{I}_{\{k\in\mathcal{O}_i\}}\ln x_{ik},
\end{align}
for $k=1,\dots,p$, where $\mathbb{I}_A$ denotes the indicator function on the set $A$.
Maximisation of \eqref{Q function} requires solving
$
\frac{\partial Q(\balpha)}{\partial\alpha_k}=0$,
$k=1,\dots,p.
$
Such a maximisation can be performed numerically using the Newton--Raphson (NR) algorithm \citep{estimation_NR}, which is nested within the EM iterations. Since the NR method is unconstrained, the updates $\alpha_k^{(r+1)}$ may become negative. While a suitable initialisation (see Section~\ref{initialisation}) can mitigate this issue during the early iterations, positivity of the Dirichlet parameters must be guaranteed throughout the optimisation. 
To enforce this constraint while retaining an unconstrained search space, we adopt the reparameterisation proposed by \citet{estimation_transformation} $\alpha_k=e^{\beta_k}$, $\beta_k\in\mathbb{R},$ for $k=1,\dots,p$.
Under this transformation, we write the objective function as  $Q(\bm{\beta}) = Q(\balpha(\bm{\beta}))$. The gradient of $Q(\bm{\beta})$ with respect to $\beta_k$ is
\begin{align}
\label{grad_beta}
\nabla_k
%=\frac{\partial Q(\bm{\beta})}{\partial\beta_k}
=
e^{\beta_k}
\left[
n\psi\!\left(\sum_{j=1}^p e^{\beta_j}\right)
-
n\psi(e^{\beta_k})
+
\sum_{i=1}^n \ismiss\ev_{ik}
+
\sum_{i=1}^n \isobs\ln x_{ik}
\right].
\end{align}
Within the $(r+1)^{\text{th}}$ iteration of the EM algorithm, the NR update at the $s^{\text{th}}$ inner iteration is
\begin{align}
\label{nr_start}
\bm{\beta}^{(s+1)}
=
\bm{\beta}^{(s)}
-
(\bH^{(s)})^{-1}\bm{\nabla}^{(s)},
\end{align}
%where $\bm{\nabla}^{(s)} = \{\nabla_k^{(s)} : k=1,\dots,p\}$,
%$\bH^{(s)}$ is the Hessian matrix of $Q(\bm{\beta})$, and
%$(\bH^{(s)})^{-1}\bm{\nabla}^{(s)}$ is the search direction,
%all evaluated at $\bm{\beta}^{(s)}$.

where $\bm{\nabla}^{(s)} = \{\nabla_k^{(s)} : k=1,\dots,p\}$,
$\bm{H}^{(s)}$ is the Hessian matrix of $Q(\bm{\beta})$, and
$(\bm{H}^{(s)})^{-1}\bm{\nabla}^{(s)}$ is the search direction,
all evaluated at $\bm{\beta}^{(s)}$. The second derivatives of $Q(\bm{\beta})$ are
\begin{align*}
\frac{\partial^2 Q(\bm{\beta})}{\partial\beta_j\partial\beta_k}
=
\begin{cases}
n e^{\beta_j+\beta_k}\psi'\!\left(\displaystyle\sum_{l=1}^p e^{\beta_l}\right),
& j\neq k,\\[6pt]
\nabla_k
+
n e^{2\beta_k}\psi'\!\left(\displaystyle\sum_{l=1}^p e^{\beta_l}\right)
-
n e^{2\beta_k}\psi'(e^{\beta_k}),
& j=k.
\end{cases}
\end{align*}
where $\psi'(\cdot)$ denotes the trigamma function. Define $\gamma^{(s)} = n\psi'\!\left(\sum_{k=1}^p e^{\beta_k^{(s)}}\right)$, 
$\ebv^{(s)} =  \begin{bmatrix} e^{\beta_1^{(s)}} & \dots & e^{\beta_p^{(s)}} \end{bmatrix}^{\top},$ and $\bm{D}^{(s)} = \mathrm{diag}\!\left(
\nabla_1^{(s)}-n e^{2\beta_1^{(s)}}\psi'(e^{\beta_1^{(s)}}),
\dots,
\nabla_p^{(s)}-n e^{2\beta_p^{(s)}}\psi'(e^{\beta_p^{(s)}})
\right).$

Then, the Hessian can be written as $\bH^{(s)}
=
\bm{D}^{(s)}
+
\gamma^{(s)}\ebv^{(s)}\ebv^{(s)\top}.
$
Using the Sherman--Morrison formula \citep{sherman1950adjustment}, the inverse Hessian is
\begin{align}
\label{sm_form}
(\bH^{(s)})^{-1}
=
(\bm{D}^{(s)})^{-1}
-
\frac{\gamma^{(s)}(\bm{D}^{(s)})^{-1}\ebv^{(s)}\ebv^{(s)\top}(\bm{D}^{(s)})^{-1}}
{1+\gamma^{(s)}\ebv^{(s)\top}(\bm{D}^{(s)})^{-1}\ebv^{(s)}}.
\end{align}
Since $\bm{D}^{(s)}$ is diagonal, letting $d_{kk}^{(s)}$ denote its $k^{\text{th}}$ diagonal element, the NR update simplifies to $\beta_k^{(s+1)} = \beta_k^{(s)} - \frac{\nabla_k^{(s)}-w^{(s)}e^{\beta_k^{(s)}}}{d_{kk}^{(s)}},$
where $
w^{(s)}
=
\frac{\displaystyle
\sum_{k=1}^p
\frac{e^{\beta_k^{(s)}}\nabla_k^{(s)}}{d_{kk}^{(s)}}}
{\displaystyle
\frac{1}{\gamma^{(s)}}
+
\sum_{k=1}^p
\frac{e^{2\beta_k^{(s)}}}{d_{kk}^{(s)}} } .$ The NR iterations continue until convergence, yielding 
$\bm{\beta}^{(r+1)}$.
The EM update of $\alpha_k$ is therefore
$
\alpha_k^{(r+1)} = e^{\beta_k^{(r+1)}}$,
$k=1,\dots,p.$
%Initialisation and convergence criteria are discussed in Section \ref{initialisation} under supplementary material.
Initialisation is discussed in Section \ref{initialisation} under supplementary material.

%%%%%%NEW PART%%%%
Two stopping rules must be specified for the proposed algorithm. 
An inner stopping rule controls the convergence of the Newton--Raphson (NR) procedure used within the M-step, while an outer stopping rule determines the convergence of the overall EM algorithm.
For the NR algorithm, convergence is monitored using the quantity $\frac{1}{2}\lambda^2$  where:
$ \lambda  = \sqrt{ \left(\Delta \bm \beta^{(s+1)}  \right)^{\top} \bm H^{(s+1)} \Delta \bm\beta^{(s+1)}    },
$
with $\Delta \bm \beta^{(s+1)} = \bm \beta ^{(s+1)} - \bm \beta ^{(s)}$ and $\bm H^{(s+1)}$ is the Hessian matrix evaluated at $\bm \beta ^{(s+1)}$. The criterion, $\frac{1}{2}\lambda^2$ provides an approximation to the distance from the current objective function value to its supremum, namely $\left| Q(\bm{\beta}) - \underset{\bm\beta}{\mathrm{sup}}Q(\bm\beta)\right|$. Thus, iterations are stopped once $\frac{1}{2}\lambda^2$  falls below a small tolerance.

%The observed-data log-likelihood increases monotonically under the EM algorithm. However, temporary stability may occur when the algorithm approaches a local maximum before moving towards the global maximum.
To assess convergence to the asymptotic log-likelihood, we employ the Aitken acceleration criterion. The complete-data likelihood involves unobserved variables and is therefore not directly usable for this purpose; consequently, convergence is monitored using the observed-data log-likelihood based on the marginal distribution of the observed data given in formula \eqref{marginal}. Let $l_o^{(r)}$ denote the observed log-likelihood at the $r^{\text{th}}$ EM iteration. The Aitken acceleration  and the corresponding estimate of the asymptotic log-likelihood are defined as
\begin{align*}
a^{(r+1)} =
\frac{l_o^{(r+2)}-l_o^{(r+1)}}{l_o^{(r+1)}-l_o^{(r)}}, \qquad
(l_o^{\infty})^{(r)} =
l_o^{(r+1)} +
\frac{l_o^{(r+2)}-l_o^{(r+1)}}{1-a^{(r+1)}}.
\end{align*}
The EM algorithm is considered to have converged when
$(l_o^{\infty})^{(r)}-l_o^{(r+1)} < \epsilon,
$
where $\epsilon>0$ is a small tolerance \citep{convergence_aitken}.

% \subsection{Finite mixtures of Dirichlet distributions}

% The algorithm outlined in Section \ref{ecm} can be extended to accommodate heterogeneous compositional data with missing values. 
% When the observed proportions arise from multiple underlying populations, a finite mixture of Dirichlet distributions provides a flexible modelling framework. 
% The mixture density is given by
% \begin{align}
%     f_{M\bD}(\bx;\bpsi) = \sum_{g=1}^G \pi_g f_{\bD}(\bx;\balpha_g),
%     \label{fmm}
% \end{align}
% where $\bpsi = \{\pi_g,\balpha_g : g=1,\dots,G\}$. 
% The mixing proportions satisfy $\pi_g>0$ and $\sum_{g=1}^G \pi_g = 1$, and $f_{\bD}(\cdot)$ is defined in \eqref{pdf_dirichlet}.

\section{Application: Mercury levels in blood}
\label{application}
The application demonstrates the usefulness of the proposed algorithm to gain for gaining insight into compositional data and its effectiveness in imputing unobserved values in incomplete datasets. Specifically, we consider a dataset on mercury levels and species in blood samples collected through surveys. The dataset contains both censored and missing values.

The proposed algorithm is applied to data from the National Health and Nutrition Examination Survey (NHANES) survey programme. Its public repository contains, among other data, laboratory results from volunteers’ blood samples. We focus on measurements of mercury species. In addition to total mercury, NHANES measures inorganic and organic mercury. Inorganic mercury typically enters the body through environmental exposure, including skin-care products. Organic mercury is characterised by includes two types: methylmercury and ethylmercury. Total mercury and its species are measured independently. Thus, although the total amount may be determined, the amount of a specific mercury type may remain unobserved. Uncertainty remains regarding methylmercury exposure from fish consumption and its potential health effects. Previous exposure estimates have relied on food-consumption surveys and measurements of methylmercury in fish. Biomarkers of exposure are therefore needed to improve exposure assessment.

Methods used to detect and quantify mercury species are subject to lower limits of detection (LLOD). Since only two types of organic mercury are record recorded, other unidentified mercury types may also be present in blood. The limits are given in \tablename~\ref{llod}.

\figurename~\ref{pair_plots} in the Supplementary section suggests that proportions of mercury types are homogeneous and negatively correlated. A Dirichlet distribution is thus fitted to the data, consisting of $n=3872$ rows with a total of 71.978\% of cells unobserved. The estimated parameters and moments are given in \tablename~\ref{mle}. The estimated parameters and moments are reported in \tablename~\ref{mle}. The proportion of methylmercury is the highest. This result is consistent with known sources of mercury exposure. In particular, methylmercury is readily absorbed and bioaccumulates through the food chain, which may explain why it accounts for the largest proportion among the mercury types observed in the dataset. Inorganic mercury has the second-largest average proportion. Typical sources of inorganic mercury exposure include environmental sources such as paints, lotions, electronics, and related products. Ethylmercury has the smallest average proportion. This form of mercury has a shorter biological half-life than methylmercury, and exposure to ethylmercury generally occurs in more specific contexts, such as medicinal injections. Thus, it is expected that this type of mercury would represent a smaller proportion of total mercury exposure.
An additional benefit of focusing on mercury proportions is that this approach highlights the presence of mercury types that are not directly identified in the data, and hence the possibility of additional, unknown sources of mercury exposure. These unknown types appear to be prevalent enough to emerge as a gap in blood-test measurements. Moreover, the proportions of these unknown mercury types are themselves largely unknown, since their estimation depends on the other measured quantities, which may also be subject to censoring. The algorithm used here provides a way to estimate the proportions of unknown mercury types while accounting for the key characteristics of the data, namely their positive support and additive constraint.

We next compare the imputations with those from competing methods. Comparison with the ALR-EM approach is difficult because some rows contain only one observed part, whereas ALR-EM requires at least two. Removing incompatible rows yields a subsample of $n=462$. Within this subsample, residual mercury and ethylmercury have only 4 and 5 observed values, respectively, making reliable estimation of the ALR-EM scale matrix difficult. To address this, four unobserved ethylmercury values are replaced by the conventional value of one-half of the detection limit, providing enough observations for ALR-EM. Because the unobserved values are genuinely unavailable, imputation accuracy cannot be evaluated by direct comparison with the true values. Instead, we adopt the heuristic that the model achieving the best fit to the observed data should provide the most reliable imputations. This is particularly appropriate here, since imputations under both the proposed method and the ALR-EM algorithm are based on conditional expectations. Accordingly, model performance can be assessed using criteria derived from the observed-data likelihood. Evaluation of this likelihood requires the marginal distribution of the observed components; however, for the multivariate logistic-normal distribution, this marginal distribution is not available in closed form. This limits the direct use of standard likelihood-based fit measures. To overcome this difficulty, we exploit the fact that any closed subcomposition of a multivariate logistic-normal random vector is itself multivariate logistic normal \citep{aitchison_logistic}. The Dirichlet distribution's neutrality property ensures the same relationship between a random vector and its closed subset \cite{dirichlet_neutrality}. Thus, after fitting each approach to the subsample, we compute the usual model-selection measures on the closed observed subcompositions. Note that the Dirichlet distribution is fitted to the subsample without substituted values. The model-selection metrics are reported in \tablename~\ref{gof}. For ease of interpretation, the Akaike information criterion (AIC) and Bayesian information criterion (BIC) are reported as $2l_o -2\rho$ and $2l_o - \rho \ln(n)$, respectively, where $l_o $ is the observed log-likelihood value at convergence and $\rho$ denotes the number of free parameters. Under this convention, larger values indicate a better fit. The Dirichlet model attains the largest observed log-likelihood, AIC, and BIC. This suggests that the Dirichlet model provides a more plausible representation of the observed data and may therefore yield more reliable imputations.

The metrics in \tablename~\ref{gof} focus on the observed components rather than the full compositional structure. To address this limitation, we note that the Dirichlet and multivariate logistic-normal distributions are not identical for any parameter values, although they may sometimes be approximately equal. If the fitted distributions differ substantially, they represent the data in distinct ways, and the relevant question becomes which is closer to the underlying distribution. We measure the difference between the fitted models using the Wasserstein distance and compare it with the distance obtained when the distributions are identical. Roughly speaking, the Wasserstein distance is a symmetric measure of the minimal effort required to reconfigure the probability mass of one distribution in order to recover the other distribution \cite{wasserstein}. Since the distributions are defined on the simplex, the distance measure used in the Wasserstein calculation is the Aitchison distance given in \eqref{aitchison_distance}. The matrix of distances are reported in \tablename~\ref{wasserstein}. Notice that the minimal amount of work needed to transform the probability mass of the multivariate logistic normal distribution into that of the Dirichlet is around 14 units, taking into account the Simplex geometry. This metric implies that the fitted Dirichlet and multivariate logistic normal distributions model the dataset differently. Combining this result with the model selection metrics in \tablename~\ref{gof}, it is plausible to consider the Dirichlet distribution a better fit.  

Comparing either the proposed algorithm or the ALR-EM approach with geometric mean imputation presents another challenge, since the latter is non-parametric. In this case, we compare the marginal densities of the imputations obtained from the three approaches. Geometric mean imputation rescales the observed components so that the completed compositions satisfy the additive constraint. That is, the observed parts are rescaled by this imputation technique, which may alter their marginal distributions. It is therefore sufficient to assess whether this warping is detrimental to the distribution of the sample, while also examining the distribution of the imputed values relative to those obtained from the model-based counterparts. From the histograms in \figurename~\ref{histograms}, the values imputed by geometric mean imputation appear to follow a distribution that is distinct from that of the other methods. 

\section{Conclusion} 
Incomplete compositional data analysis faces a fundamental limitation: although the Dirichlet distribution is the natural probabilistic model for compositions, no likelihood-based estimation method directly accommodates incomplete proportions on the unit simplex. Consequently, analysts are typically forced either to discard partially observed compositions or to transform the data into unconstrained spaces, thereby sacrificing interpretability and coherence with the compositional structure \citep{alr_em_algorithm}.
This paper addresses this gap by developing a model-based estimation framework for Dirichlet models in the presence of missingness and censoring, which can both be seen as a form of coarsening. An EM-type algorithm is proposed to obtain maximum likelihood estimates of the Dirichlet distribution from incomplete compositional data while remaining entirely within the unit simplex. The method accommodates two coarsening mechanisms: coarsening completely at random (CCAR) and coarsening at random (CAR).

Simulation studies demonstrate that the proposed method substantially improves parameter recovery compared with maximum likelihood estimation based only on complete cases. Moreover, imputations produced by the proposed algorithm exhibit smaller errors than those obtained using the current model-based approach and geometric mean imputation, which is commonly used for its simplicity and minimal assumptions. Notably, the algorithm performs reliably even under extreme conditions, including observations with only a single observed component and datasets with up to 90\% unobserved values, highlighting its reliability and broad applicability.

More broadly, this work reinforces the role of the Dirichlet distribution and its extensions as natural models for compositional data. When incomplete compositions arise, analysts have historically relied on methods designed for unconstrained data, adapting the data rather than the model. The approach proposed here reverses this perspective by developing estimation tools that respect the geometry and constraints of compositional data. By enabling likelihood-based inference and clustering directly on the simplex in the presence of unobserved values, this work opens the door to further developments in simplex-based modelling, including richer mixture structures and more flexible inferential procedures.

\label{conclusion}

\section{Disclosure statement}\label{disclosure-statement}
The authors declare no conflict of interest exits.
%The authors have the following conflicts of interest to declare (or replace with a statement that no conflicts of interest exist).

\section{Data Availability Statement}\label{data-availability-statement}

Deidentified data have been made available at: \url{https://wwwn.cdc.gov/nchs/nhanes/search/datapage.aspx?Component=Laboratory&Cycle=2021-2023}.

\bibliography{bibliography.bib}

\phantomsection\label{supplementary-material}
\bigskip

\newpage
\begin{center}

{\large\bf SUPPLEMENTARY MATERIAL}

\end{center}

\subsection{Initialisation}
\label{initialisation}

The Newton--Raphson (NR) algorithm, performed within each M-step, requires suitable starting values to improve the chances of convergence to the global maximum of the log-likelihood.
When the NR algorithm is applied directly to the Dirichlet parameters $\balpha$, 
it has been noted that setting each $\alpha_k$ equal to the minimum value of the 
$k^{\text{th}}$ data column prevents the algorithm from producing negative updates 
when the method-of-moments estimates are close to zero \citep{initialisation_roning}.
Due to the $\bbeta$-reparameterisation adopted in the M-step, the issue of negative values no longer arises. 
Consequently, the method-of-moments estimates provide convenient and reliable starting values \citep{estimation_NR}. 
The moment-based initialisation of the Dirichlet parameters only requires the first two raw moments of each component. Let $\mathcal{O}_k = \{i : x_{ik} \text{ is observed}\}$ denote the set of observations for which the $k^{\text{th}}$ component is available, and let $n_k = |\mathcal{O}_k|$. 
The $j^{\text{th}}$ raw sample moment of the $k^{\text{th}}$ component is computed using the available observations as
\begin{align}
\overline{x}_{jk}
=
\frac{1}{n_k}
\sum_{i \in \mathcal{O}_k} x_{ik}^{\,j}.
%,
%\qquad j=1,2 .
\end{align}
As documented by \citet[][Section~2.8.1]{ng2011dirichlet}, the method-of-moments estimate of the $k^{\text{th}}$ element of $\balpha$ is given by
\begin{align}
\overline{\alpha}_k =
\frac{\overline{x}_{11}-\overline{x}_{21}}
{\overline{x}_{21}-\overline{x}_{11}^2}
\,\overline{x}_{1k},
\qquad k=1,\dots,p .
\end{align}

\section{Simulation experiment}
\label{simulations}
The goal of the simulation experiment is to evaluate two aspects of the
proposed algorithm: (1) parameter estimation for the Dirichlet distribution in
the presence of incomplete data, and (2) imputation of the unobserved
components. The incomplete-data settings considered here include missing data
and censored data. Moreover, the three coarsening mechanisms are included: CCAR, CAR, and CNAR. CNAR was included to test the method's performance when the assumptions are violated.
Accordingly, the simulation design examines the performance of the algorithm’s estimators when the data contain missing CCAR (MCAR) values, missing CAR (MAR) values, and censored CNAR observations. The study also varies the sample size, thereby providing evidence regarding the large-sample properties of the estimation procedure. In addition, it evaluates the quality of the imputations produced by the algorithm.
The simulation experiment is divided into three parts. In parts A, and B, four different sample sizes are considered:  $n=50$, $n = 200$, $n=500$, and $n=1000$. In part C, four sample sizes are unchanged except $n=50$ is changed to $n=100$. For each sample size, $1000$ datasets are simulated.

\begin{itemize}
    \item In Part A, for each dataset, a percentage of observations is hidden completely at random. The proposed algorithm is then fitted, and its performance is evaluated in terms of how accurately it recovers the true parameters. The results are compared with those obtained by fitting a Dirichlet distribution using only the fully observed rows. Part A concludes by assessing the imputation accuracy of the proposed method, comparing it with a current model-based imputation method and a nonparametric geometric mean imputation approach.

    \item Part B examines the performance of the algorithm using the same evaluation criteria as in Part A, but with data hidden under CAR mechanism.

    \item In Part C, parameter recovery and imputation is assessed using CNAR censored data. For the imputation, left-censored data are generated to maintain comparability with the model-based competing method.
\end{itemize}

To evaluate how well the fitted distributions recover the true parameters, we use the average absolute bias (AAB) and root mean squared error (RMSE) metrics, as defined in \citet{test}. 
Let $\btheta$ be the unknown parameter of dimension $p$ to be estimated.
For the $b^{\text{th}}$ dataset, let the estimated values be denoted as $\hat{\btheta}^{(b)}$, for $b = 1, \dots, 1000$. The average absolute bias and RMSE are defined as:
\begin{align}
\label{ab_and_rmse}
\text{AAB}(\hat{\btheta}) &= \frac{1}{1000} \sum_{b=1}^{1000} \sum_{k=1}^p |\hat{\theta}^{(b)}_{k} - \theta_{k}| \hspace{0.5cm} \text{ and } \hspace{0.5cm} \text{RMSE}(\hat{\btheta}) = \sqrt{ \frac{1}{1000} \sum_{b=1}^{1000} \sum_{k=1}^p \left(\hat{\theta}^{(b)}_{k} - \theta_{k} \right)^2 }.
\end{align}
Additionally, the proposed algorithm provides imputations for the missing elements in the simulated datasets. We therefore assess how closely these imputations match the true simulated values. Since each observation lies on the simplex, Euclidean distance is not an appropriate measure of discrepancy. Instead, the Aitchison distance is used to compare the imputed values with the corresponding true values. Denoting the $i^{\text{th}}$ true vector as $\bx_i$ and the imputed vector as $\bx_i^*$, the squared Aitchison distance between the two is given as \citep{aitchison_distance}:
\begin{align}
\label{aitchison_distance}
    d^2_i(\bx_i, \bx^*_i) = \left [ \sum_{k=1}^p \left(\ln\frac{x_{ik}}{g(\bx_{i})} - \ln\frac{x^*_{ik}}{g(\bx^*_{i})} \right)\right ]^2,
\end{align}
where $g(\bx) = \left [\displaystyle\prod_{k=1}^px_k \right]^{\frac{1}{p} }$ is the geometric mean. The average compositional error variance measure is defined as \citep{alr_em_method}:
\begin{align}
    CEV = \frac{1}{1000} \sum_{b=1}^{1000} \frac{1}{m_{b} } \sum_{i=1}^{ m_{b} }  d^2_i(\bx_i, \bx^*_i),
\end{align}
where $m_b$ is denotes the number of rows with at least two unobserved cells in the $b^{\text{th}}$ simulated dataset.

\subsection{Simulation results: Part A}
Datasets are simulated from a Dirichlet distribution with the parameter vector:
\begin{align}
\label{true_parm}
 \balpha = [ 6, 3, 5, 0.5,2]^{\top}.
\end{align}
The chosen parameter values generate data with different concentration levels, resulting in a variety of contour shapes. This choice provides a representative view of the different behaviours that the Dirichlet density in \eqref{pdf_dirichlet} can exhibit, shown in \figurename ~\ref{contours}.
\begin{figure}[H]
    \centering
    \fbox{ \includegraphics[width=0.65\textwidth]{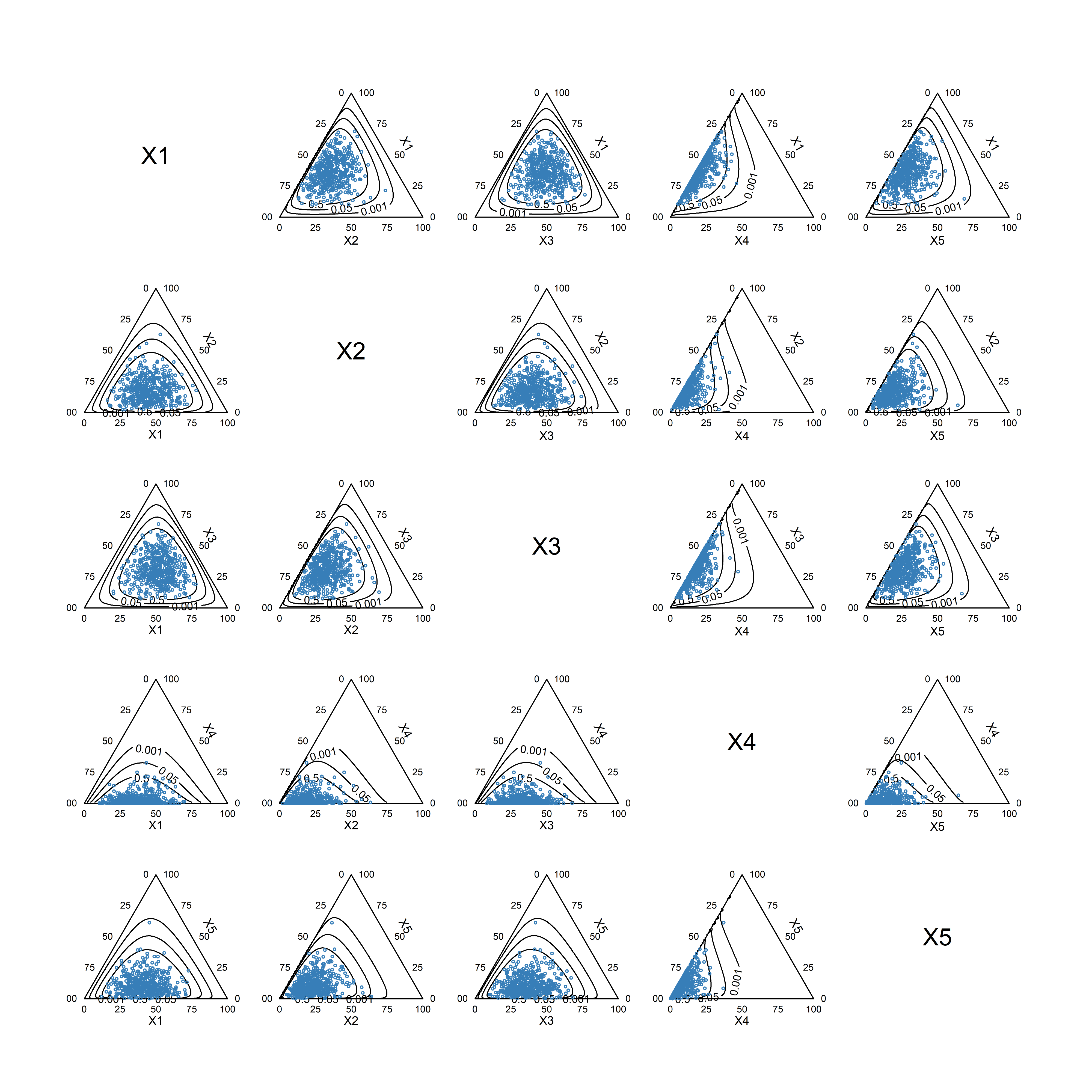}}
    \caption{Pairwise contour plots of a Dirichlet distribution with parameter vector given by \eqref{true_parm}. }
    \label{contours}
\end{figure}
The observations for random variables $[X_1, X_2, X_3, X_4]^{\top}$ are hidden according to the percentages 0\%, 10\%, 20\%, $\dots$, and 90\%. 

\subsubsection{Simulation results: Part A - Parameter recovery}
\begin{figure}[H]
    \centering
    \includegraphics[width=\textwidth]{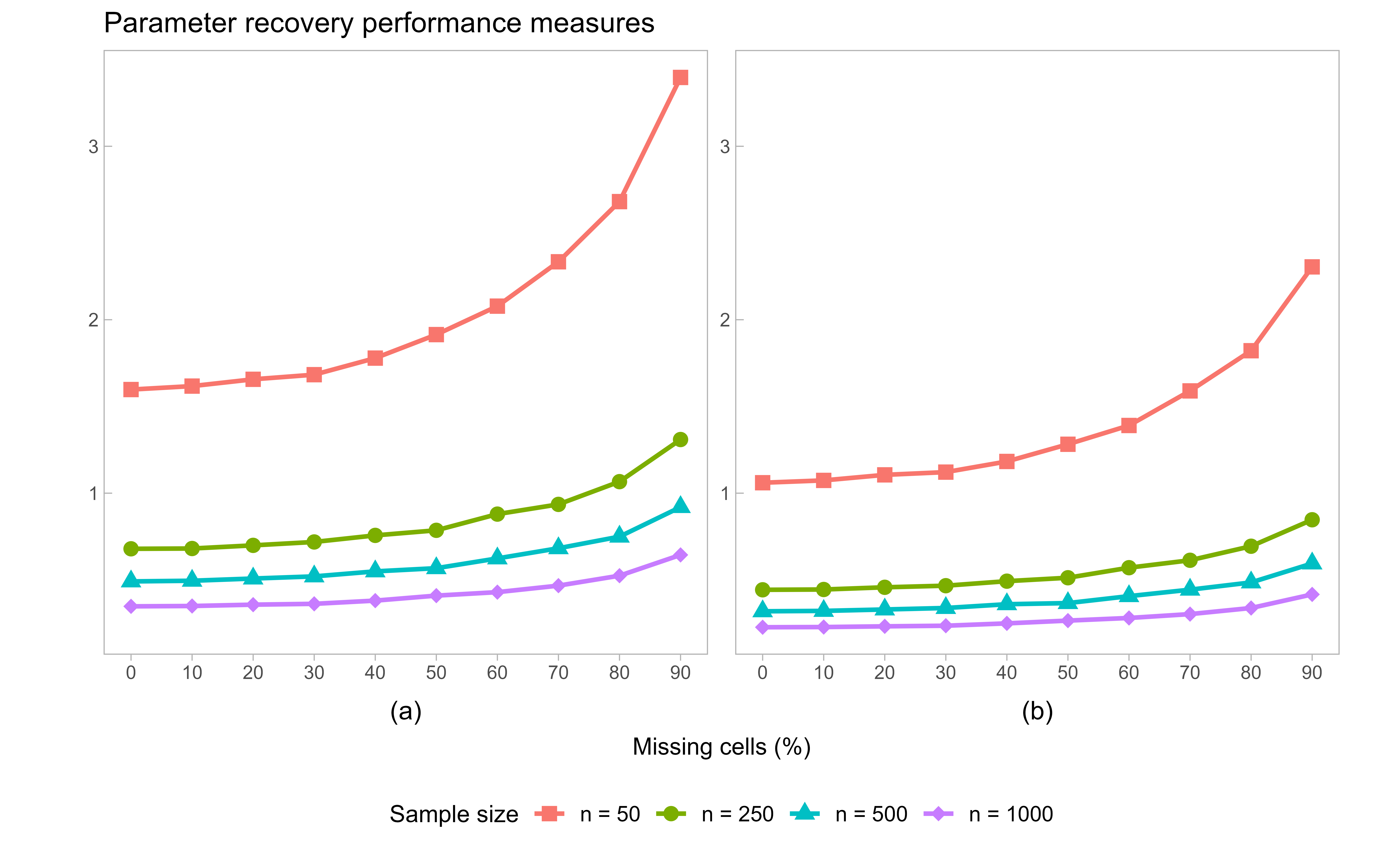}
    \caption{(a) AAB for percentage of missing values and (b) RMSE for percentage of missing values for simulation Part A. }
    \label{alpha_measures}
\end{figure}

\begin{figure}[H]
    \centering
    \includegraphics[width=\textwidth]{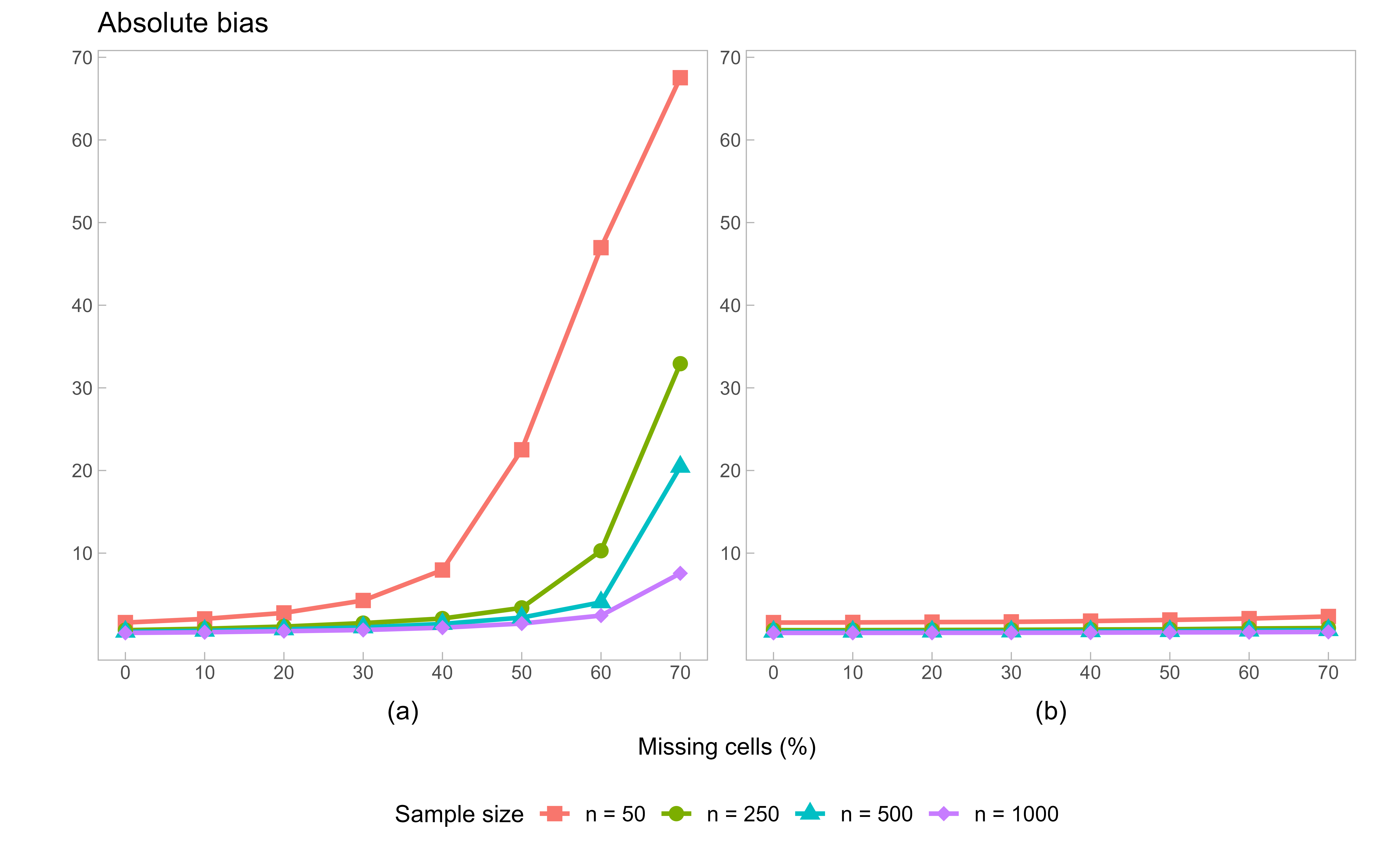}
    \caption{(a) AAB of parameter estimates obtained using only non-missing rows of the simulated dataset and (b) AAB using the proposed algorithm  for simulation Part A. }
    \label{alpha_bias_comparison}
\end{figure}

\begin{figure}[H]
    \centering
    \includegraphics[width=\textwidth]{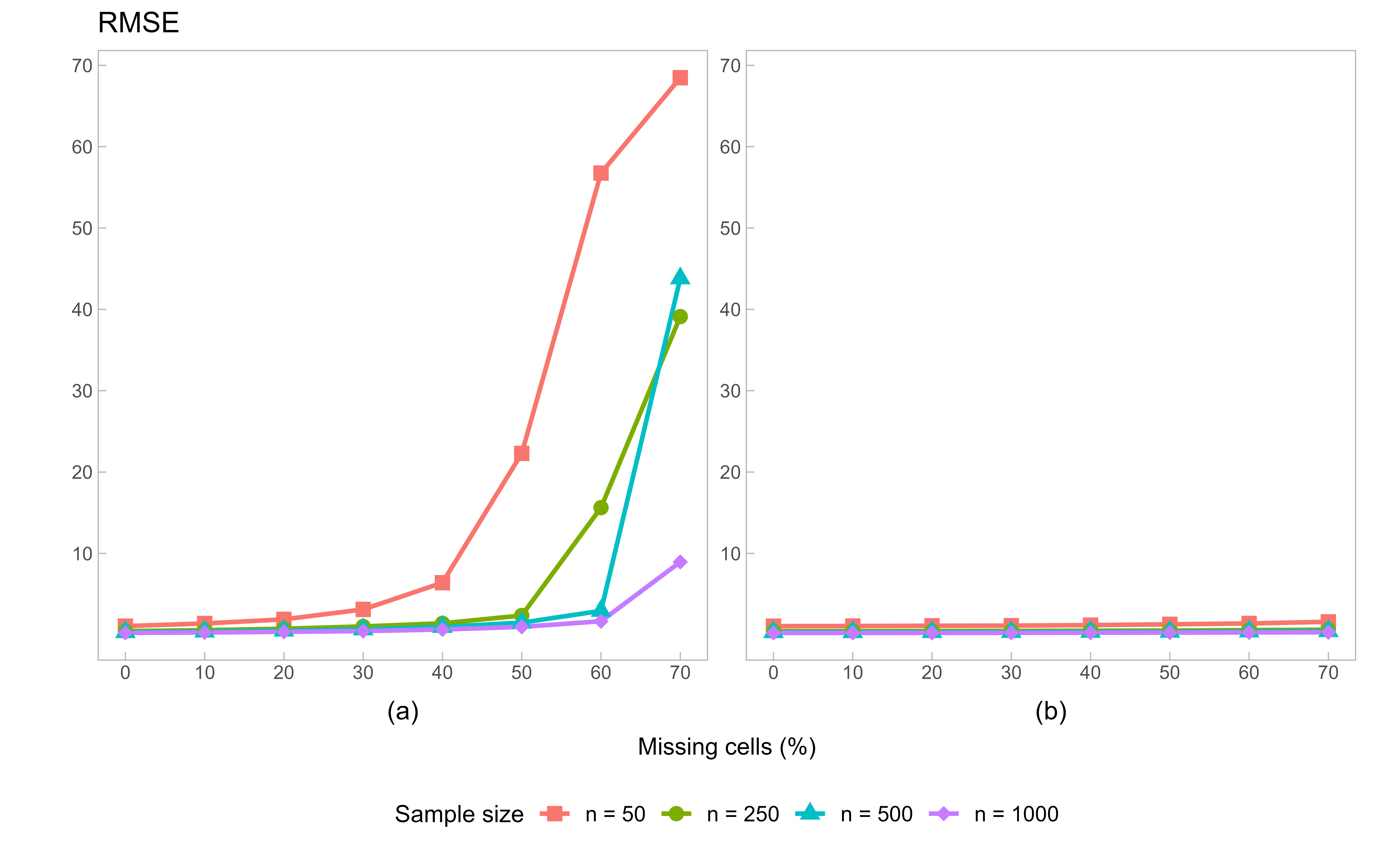}
    \caption{(a) RMSE of parameter estimates using only non-missing rows of the simulated dataset and (b) RMSE using the proposed algorithm for simulation Part A. }
    \label{alpha_rmse_comparison}
\end{figure}
As expected, both AAB and the variability of the estimated parameters increase as the proportion of missing values in the dataset grows, as shown in Figure \ref{alpha_measures}. This occurs because fewer observed values imply less information is available, and the estimation procedure must rely increasingly on the imputations produced by the algorithm. This also explains the reduction in error metrics as the sample size increases (from 
$n=50$ to $n=1000$).
Although this degradation in performance is an inherent consequence of working with incomplete data, the results demonstrate that reliable parameter estimates can still be obtained even when the dataset is heavily affected by missing values. In particular, the maximum observed values of AAB and RMSE are approximately 3 and 2, respectively, and occur when 90\% of the data are missing completely at random.
Figures \ref{alpha_bias_comparison} and \ref{alpha_rmse_comparison} show that these error levels represent a substantial improvement compared with the approach that fits the Dirichlet distribution using only the fully observed rows. For this comparison, the missingness proportion was capped at 70\%, since NR procedure failed to converge reliably at higher levels of missingness. Notably, for small sample sizes ($n=50$), the algorithm reduces the estimation error by a factor greater than 20.
Consequently, the advantages of the proposed algorithm are twofold: first, parameter estimation remains possible even when the dataset contains no complete observations; second, the resulting estimates provide substantially better recovery of the true parameters than those obtained using row deletion. The AAB and RMSE grow extremely large for higher percentages of missing values. This is because the parameters are estimated on a subset of values with lowered variation. As a consequence, the maximisation of the likelihood function causes the density to concentrate around the complete observations and it results in the estimator of the parameter $\balpha$ to grow arbitrarily large.

\subsubsection{Simulation results: Part A - Imputation}
To demonstrate the proposed algorithm’s imputations performance are compared with those from two popular existing techniques. The first technique applies an additive logratio transform (ALR) to a compositional dataset in order to remove the additive constraint on the rows. The transformed data then lie in $\mathbb{R}_p$ rather than on the simplex $\simplex$, and are assumed to follow a multivariate normal distribution. The EM algorithm for fitting a multivariate normal distribution to incomplete data is well established \citep{alr_em_algorithm, alr_em_method, alr_in_r}; therefore, missing values in the transformed space are imputed using this EM procedure. After convergence, the resulting imputations are mapped back onto $\simplex$ using the inverse additive logratio transformation. This approach is referred to as a modified logratio EM algorithm (ALR-EM).
The second technique considered is geometric mean imputation. Its popularity stems primarily from its simplicity: no transformation of the data is required, and it constitutes a nonparametric approach \citep{geo_mean}.
For a consistent comparison between these two competing approaches and the proposed algorithm, observations for the random variables $[X_1, X_2, X_3]^{\top}$ are hidden according to the percentages 0\%, 10\%, 20\%, 30\%, $\dots$, 70\%. This design is necessary because the ALR transformation requires at least two observed components in each row to be defined. Furthermore, for small sample sizes, the initialisation of the scale matrix can become singular when the percentage of missing values exceeds 70\%, preventing the ALR-EM algorithm from running and thereby interrupting the simulation. Thus, no results are shown for missingx percentages above 70\%.
\begin{figure}[H]
    \centering
    \includegraphics[width=0.6\textwidth]{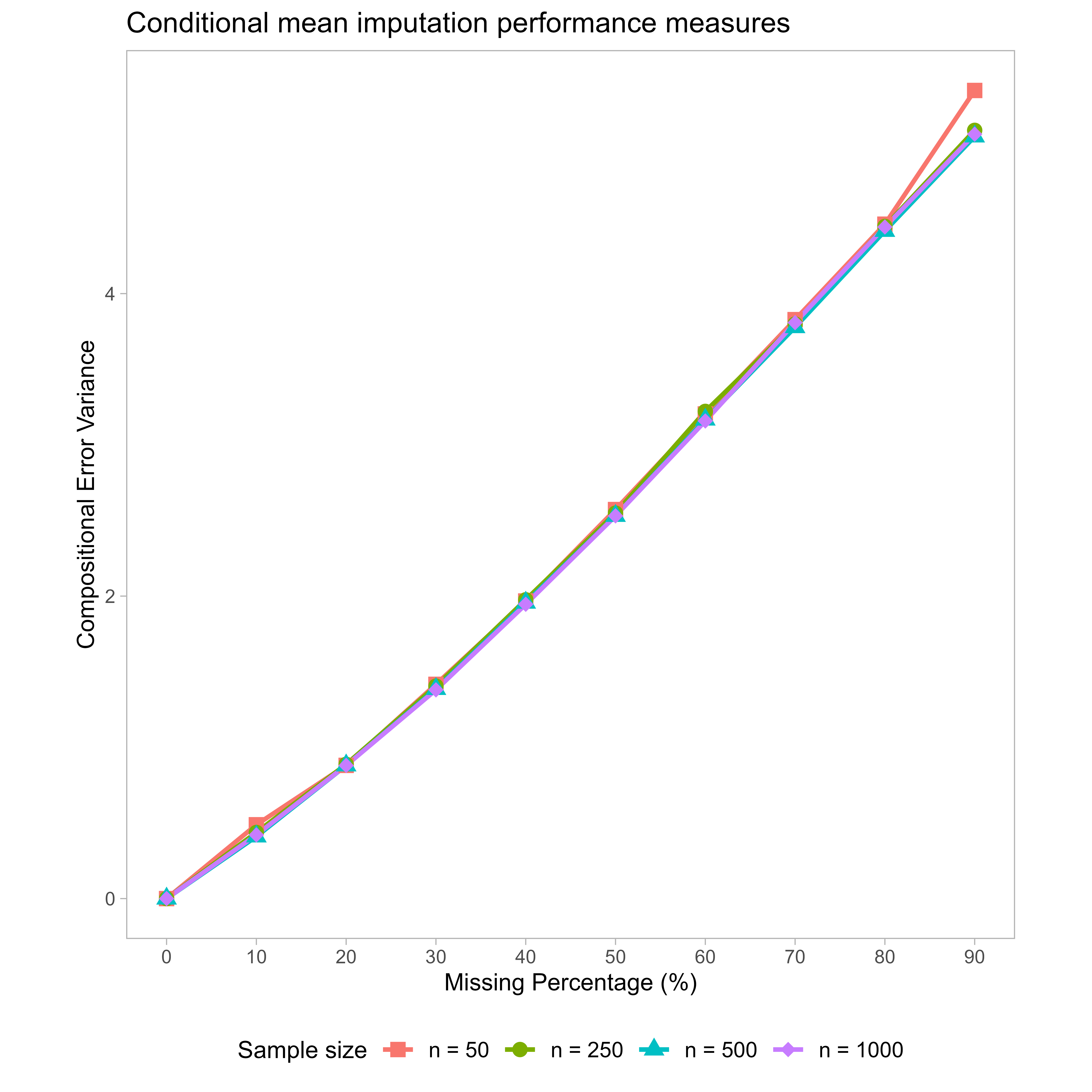}
    \caption{CEV for imputed values after fitting a Dirichlet distribution with the proposed algorithm under varying percentage of missing values in the first four columns of simulated datasets. }
    \label{cev}
\end{figure}

\begin{figure}[H]
    \centering
    \includegraphics[trim={1.8cm 0cm 1cm 1cm}, clip, width=1\linewidth]{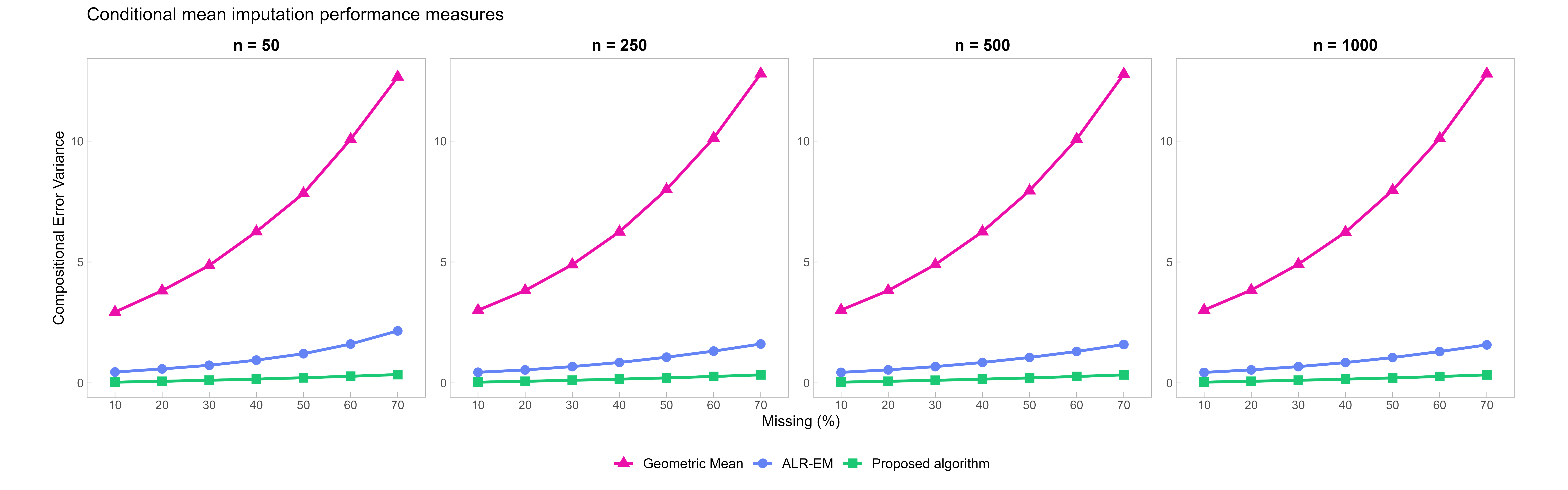}
    \caption{CEV for imputed values from geometric mean imputation, ALR-EM, and the proposed algorithm under varying percentages of missing elements in the first three columns of the simulated datasets.}
    \label{cev_comparison}
\end{figure}
Interestingly, from Figure \ref{cev}, we see that the difference between observed and imputed values grows linearly as the percentage of missing values increases. The error is also independent of sample size. This is because the number of imputations grows with the sample size. 

The CEVs for each method, plotted in Figure \ref{cev_comparison}, show that imputations produced by the proposed algorithm are more similar to the true values than those obtained from the competing methods. In particular, the geometric mean imputation exhibits the poorest performance relative to the ALR-EM algorithm, which may be attributed, in part, to the fact that this method ignores the covariance structure inherent in compositional data.
These results suggest that the transform and back-transform steps in the ALR-EM algorithm introduce additional differences.   As a result, dissimilarity increases more rapidly compared with the proposed algorithm, which shows a more stable, approximate linear trend.
As with the results shown in Figure \ref{cev}, the sample size does not appear to substantially affect the outcomes. Instead, the results highlight the flexibility of the proposed algorithm: it can handle datasets more severely affected by missingness and requires only one observed component per row, whereas the competing methods require at least two observed components.

\subsection{Simulation results: Part B}
In this part, for each of the 1000 datasets generated, observations for random variables $[X_1, X_2, X_3, X_4]^{\top}$ are hidden depending on the observed values of $X_5$. The assumption is that the larger values of $X_5$ are associated with higher likelihoods of missing values in $[X_1, X_2, X_3, X_4]^{\top}$. Denoting $q_i$ as the $i^{\text{th}}$ empirical quartile of $X_5$, the probability of values missing in $[X_2, X_3, X_4]^{\top}$ are: 
\begin{flalign}
\label{mar_scheme}
&\mathbb{P}(X_k \text{ is missing }|x_5) = \begin{cases}
    0.1 & \text { if } 0   \leq x_5 < q_1, \\
    0.2 & \text { if } q_1 \leq x_5 < q_2, \\ 
    0.5 & \text { if } q_2 \leq x_5 < q_3, \\ 
    0.6 & \text { if } q_3 \leq x_5,   
\end{cases} &
\end{flalign}
for $k = 2,3,4$. Furthermore, $X_1$ is subjected to a different set of probabilities:\\
$\mathbb{P}(X_1 \text{ is missing }|x_5) = \begin{cases}
    0.1 & \text { if } 0   \leq x_5 < q_1, \\
    0.2 & \text { if } q_1 \leq x_5 < q_2, \\ 
    0.4 & \text { if } q_2 \leq x_5 < q_3, \\ 
    0.9 & \text { if } q_3 \leq x_5.   
\end{cases} $.\\
This follows the design of \citet{mar_simulation_setup} for evaluating models on MAR data. Since missingness is governed by a probabilistic mechanism, the percentage of missing values is not directly controlled in the simulation study. Moreover, because the mechanism generating missing values for the random variable $X_1$ differs from that used for $X_2$, $X_3$, and $X_4$ in terms of the associated probabilities, parameter recovery and imputation performance are assessed element-wise using the AAB and RMSE, as defined in \eqref{ab_and_rmse}. For imputation performance,  $\btheta$ and $\hat{\btheta}$ in AAB and RMSE are replaced with $x_{ik}$ and  $x^*_{ik}$, with $k = 1, \ldots, 3$, and are averaged over the number of observations with missing values.

\subsubsection{Simulation results: Part B - Parameter recovery}
\begin{figure}[H]
    \centering
    \includegraphics[width=\textwidth]{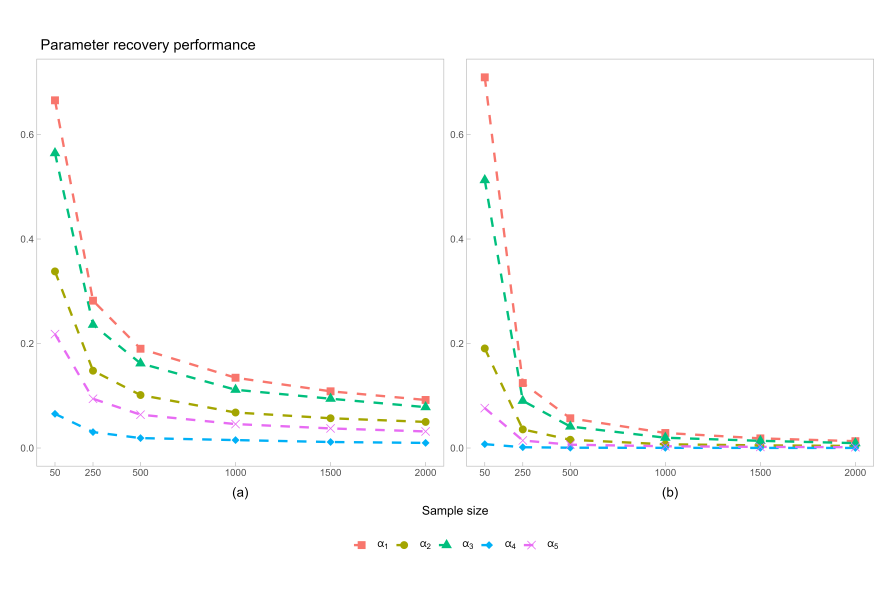}
    \caption{(a) AAB for percentage of missing values and (b) RMSE for percentage of missing values for Part B. }
    \label{alpha_mar}
\end{figure}

Figure \ref{alpha_mar} shows that both the AAB and RMSE decrease toward zero as the sample size increases, indicating that the estimates obtained using the proposed algorithm are consistent.

\subsubsection{Simulation results: Part B - imputation}

To compare the imputations between the proposed algorithm and the competitors, only observations from $X_1,X_2$, and $X_3$ are hidden as in the scheme given in \eqref{mar_scheme} to allow at least two observations for an ALR transform. The different probabilities suggest that the imputation across these three random variables might be affected differently. With that in mind, the average AAB and RMSE are calculated for $X_1,X_2$, and $X_3$ individually.
\begin{figure}[H]
    \centering
    \includegraphics[width=\textwidth]{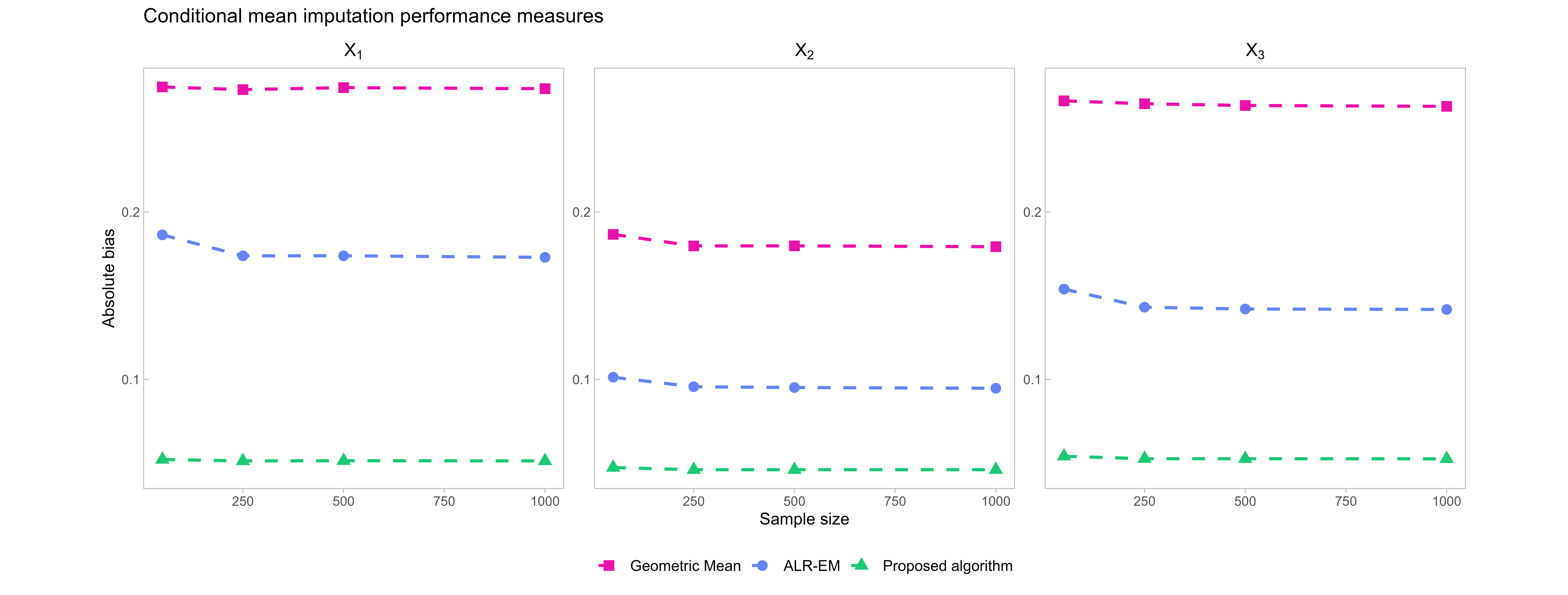}
    \caption{ Average AAB of imputed values for $X_1$, $X_2$, and $X_3$ generated by the geometric mean, ALR-EM and the proposed algorithm.}
    \label{mar_bias}
\end{figure}

\begin{figure}[H]
    \centering
    \includegraphics[width=\textwidth]{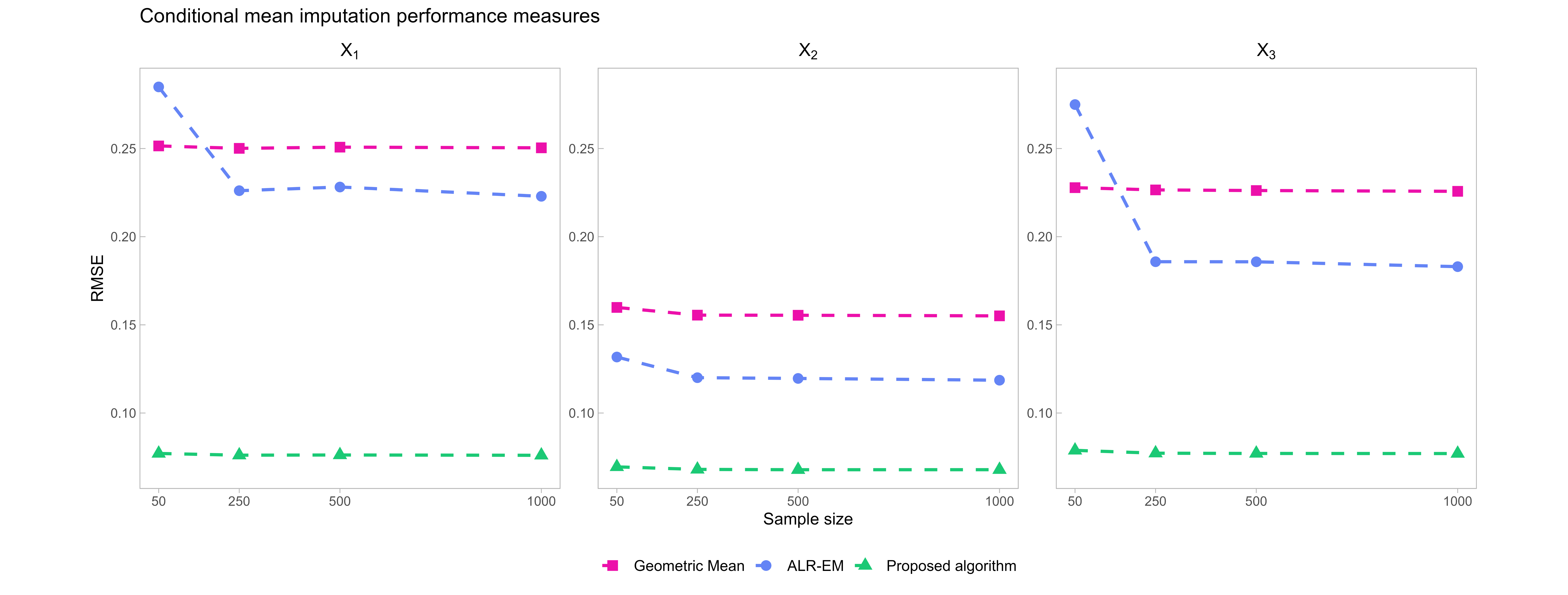}
        \caption{  Average RMSE of imputed values for $X_1$, $X_2$, and $X_3$ generated by the geometric mean, ALR-EM and the proposed algorithm.}
    \label{mar_rmse}
\end{figure}
Figures \ref{mar_bias} and \ref{mar_rmse} reveal that when data are truly missing at random, the quality of imputations is affected by the proportion of values missing from each random variable. In particular, a distribution-free technique such as geometric mean imputation ignores the dependency between the observed and missing components, leading to the large errors observed. 
\begin{figure}[H]
    \centering
    \includegraphics[width=\textwidth]{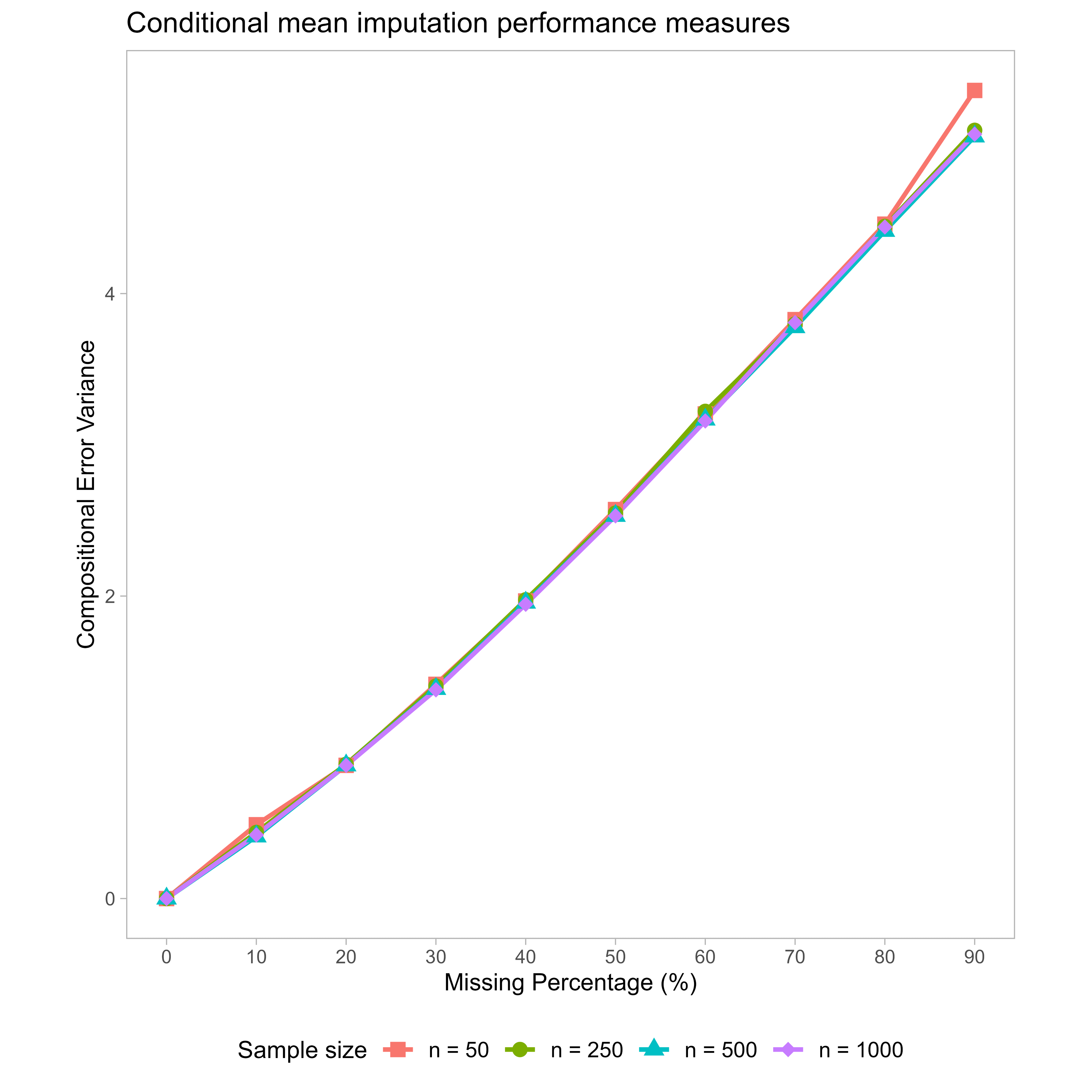}
    \caption{CEV for imputed elements.}
    \label{cev_mar}
\end{figure}
Similarly, the error metrics of the ALR-EM algorithm fluctuate across variables $X_1$ to $X_3$, although these fluctuations are less pronounced than those of the distribution-free method. In contrast, the proposed algorithm yields the lowest error metrics, with trends that remain consistent across variables. Because the dependency between the observed and missing components of each vector is explicitly accounted for, the imputation errors exhibit minimal variability.
Looking holistically, the CEVs in \figurename~\ref{cev_mar} corroborate with these findings demonstrating that the flexibility and appropriateness of the proposed algorithm's imputations persist when assessment respects the geometry of the simplex.

\subsection{Simulation results: Part C}
In this part, 1000 datasets are simulated from a Dirichlet distribution with the parameters given in \eqref{true_parm}. Censoring is imposed on the variables $X_1$ to $X_4$, based on the variable's empirical quantiles. That is
\begin{align}
\label{in_cen_region}
    q_k\left({\frac{\delta}{2}}\right)< x_k < q_k\left({1-\frac{\delta}{2}}\right)
\end{align} is observed where $q_k(\cdot)$ is a quantile corresponding to some probability $\delta$ for $k=1,\dots,p$. The threshold depends on $\delta$ and is used to control the percentage of unobserved values in the dataset. Observations are therefore hidden when they fall outside the region defined in \eqref{in_cen_region}. The simulation again focuses on parameter recovery as progressively higher levels of censoring are applied to the datasets. It also compares the imputations produced by the proposed algorithm with those obtained from the ALR-EM algorithm and geometric mean imputation.

\subsubsection{Simulation results: Part C - Parameter recovery}

\begin{figure}[H]
    \centering
    \includegraphics[width=\textwidth]{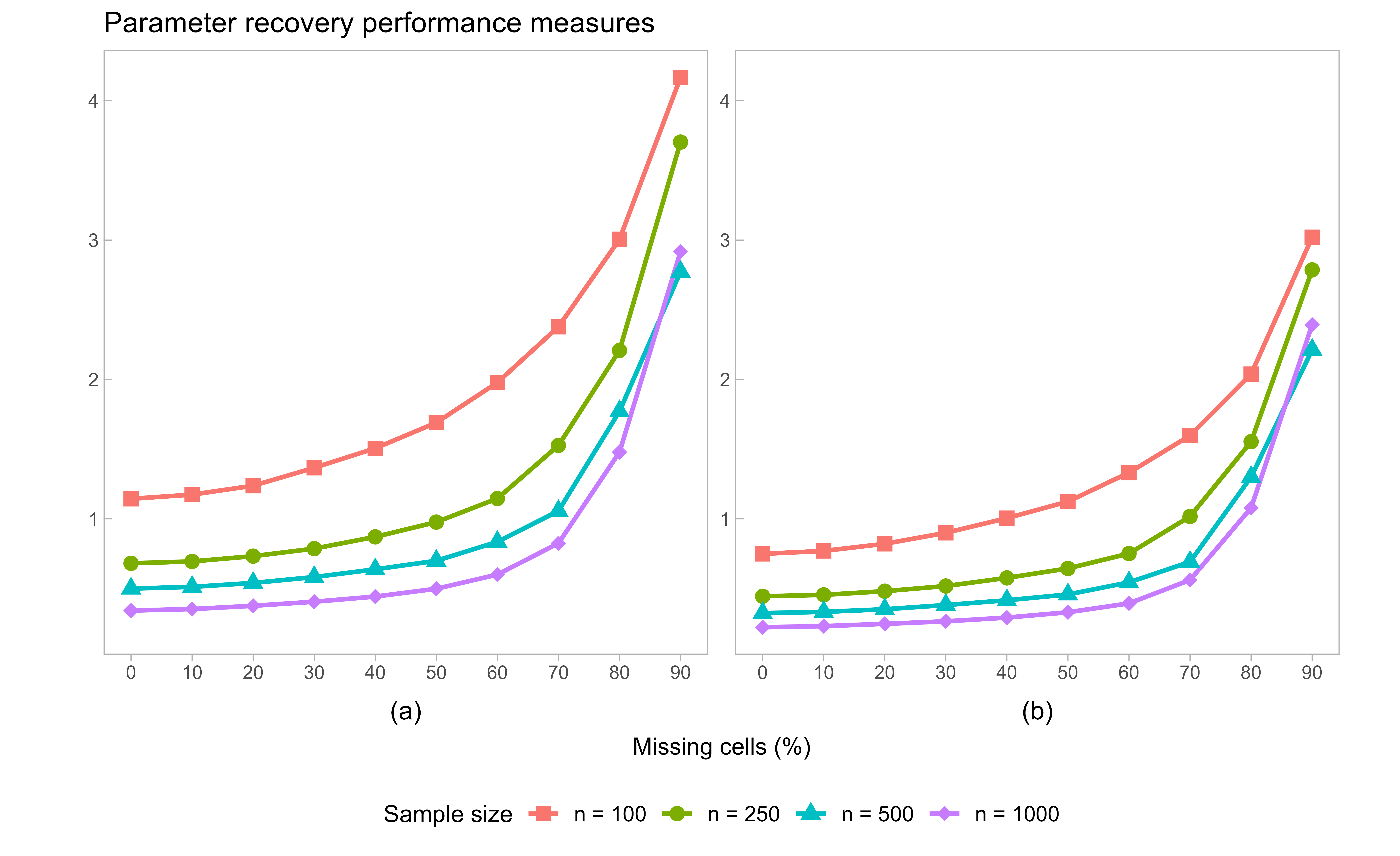}
    \caption{(a) AAB of estimated parameters for increasingly higher percentages of censored values and (b) RMSE of estimated parameters for increasingly higher percentages of censored values.}
    \label{alpha_measures_cens}
\end{figure}
As in Part A, it is also of interest to assess parameter recovery using only the fully observed rows. The results are shown in \figurename~\ref{alpha_measures_cens_complete}. At higher censoring levels, convergence was not achieved for all simulation iterations. The reported AAB and RMSE values therefore correspond only to the iterations that converged successfully. In some settings, particularly for smaller sample sizes. For instance, when the censoring percentages exceeded 80\%, the$ n=100$ and $n=250$ datasets contained insufficient fully observed rows for the Dirichlet model to be fitted. Thus, no results are shown for censored percentages above 70\%.
\begin{figure}[H]
    \centering
    \includegraphics[width=1.2\textwidth]{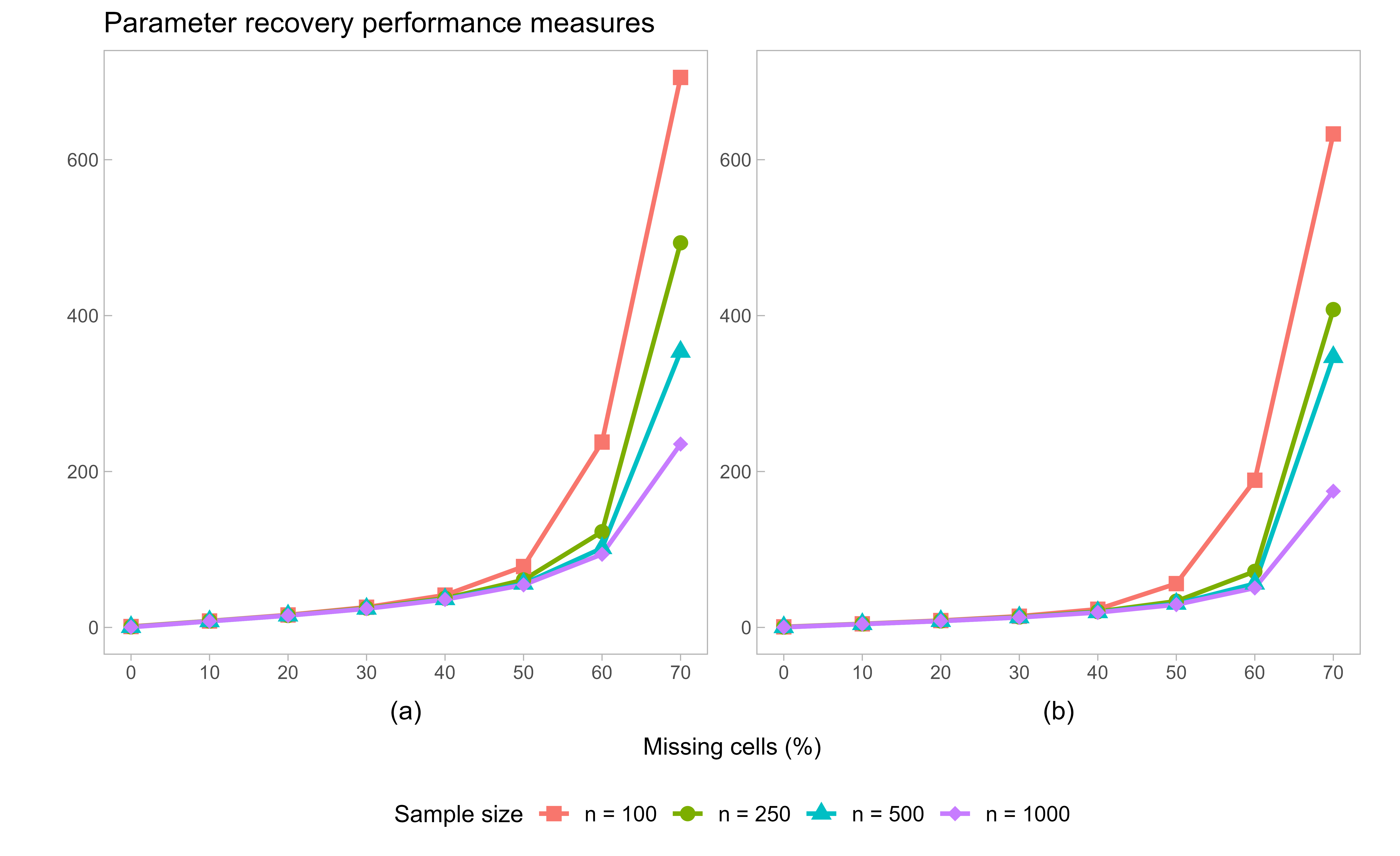}
    \caption{(a) AAB of estimated parameters for increasingly higher percentages of censored values based only on the fully observed rows and (b) RMSE of estimated parameters for increasingly higher percentages of censored values based only on the fully observed rows.}
    \label{alpha_measures_cens_complete}
\end{figure}
The seemingly exponentially increasing trends observed in \figurename~\ref{alpha_measures_cens} are consistent with those in \figurename~\ref{alpha_measures}. Regardless of the coarsening mechanism, the estimates obtained from the proposed algorithm exhibit lower bias and variability compared to those based solely on complete cases. As the percentage of unobserved observations exceeds 70\%, performance deteriorates relative to lower levels of unobserved values, which is expected. However, the increase in RMSE and AAB remains negligible compared to the corresponding complete-case scenario in \figurename~\ref{alpha_measures_cens_complete}.
\subsubsection{Simulation results: Part C - Imputation}
The ALR-EM algorithm currently only supports data subject to left-censoring. Thus, to allow for a comparison, the 1000 datasets that are simulated are censored if they fall below a threshold value, $q_k\left(\delta\right)$. In this setup, the variables $X_1$ to $X_3$ are hidden if they fall below the quantile $q_k\left(\delta\right)$ where $\delta \in \{ 0.1,0.2,0.3\}$. The CEVs for the imputations are given in Figure \ref{cev_comparison_cens}.
\begin{figure}[H]
    \centering
    \includegraphics[trim={1.6cm 0cm 1cm 1cm}, clip, width=1\linewidth]{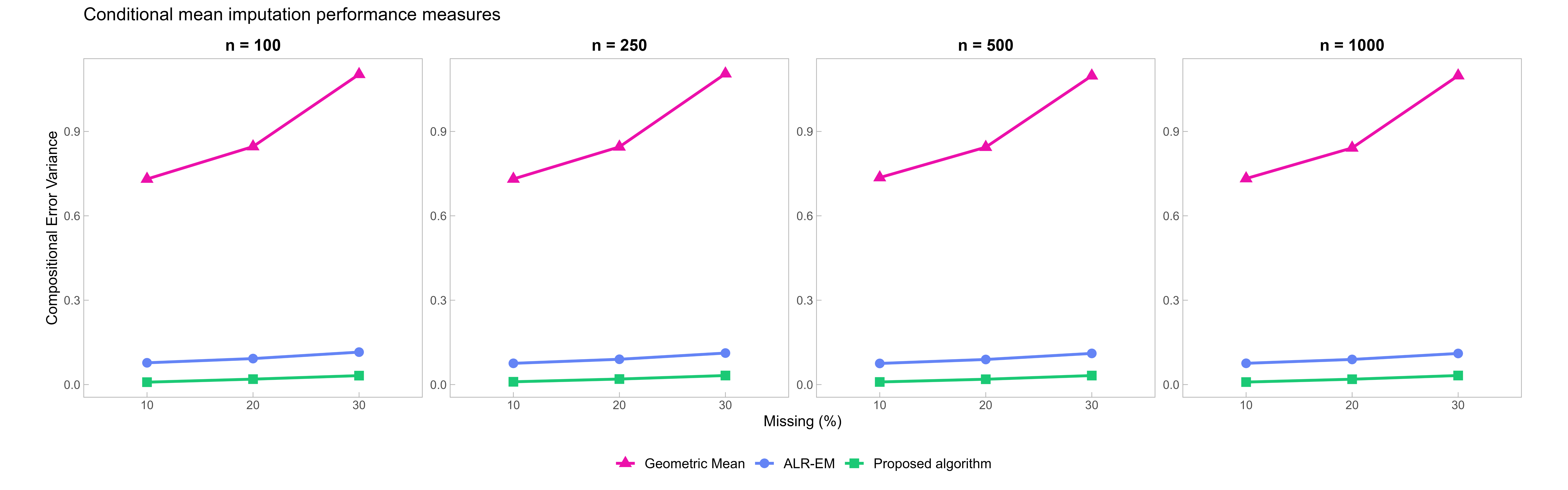}
    \caption{CEV for imputed values from geometric mean imputation, ALR-EM, and the proposed algorithm under varying percentages of censored elements.}
    \label{cev_comparison_cens}
\end{figure}
The trends of the CEVs given in Figure \ref{cev_comparison_cens} show that the proposed algorithm performs the best out of the three methods. These trends are consistent with what was seen in Part A and B under the CCAR and CAR missing mechanisms, respectively (refer to Figure \ref{cev_comparison}, Figure \ref{mar_bias} and Figure \ref{mar_rmse}).

\section{Data application: Tables and Figures}

\begin{table}[H]
    \centering
    \begin{tabular}{ccSS}
    \toprule
        Variable Name & Analyte Description	&{LLOD ($\mu_g/L)$} & {\% unobserved}\\
        \midrule
        LBXIHG	      & Mercury, inorganic &  0.21              & 83.935950     \\
        LBXBGE	      & Mercury, ethyl     &  0.064             & 99.819215     \\
        LBXBGM	      & Mercury, methyl    &  0.26              & 4.235537      \\
        LBXHGR        & Mercury, residual  &  {-}              & 99.922521      \\
        \bottomrule
    \end{tabular}
    \caption{Detection limits for the different types of mercury measured in the NHANES survey between 2021 and 2023 and percentage of unobserved (missing or censored) observations. }
    \label{llod}
\end{table}

\begin{figure}[H]
    \centering
    \fbox{\includegraphics[width=0.7\textwidth]{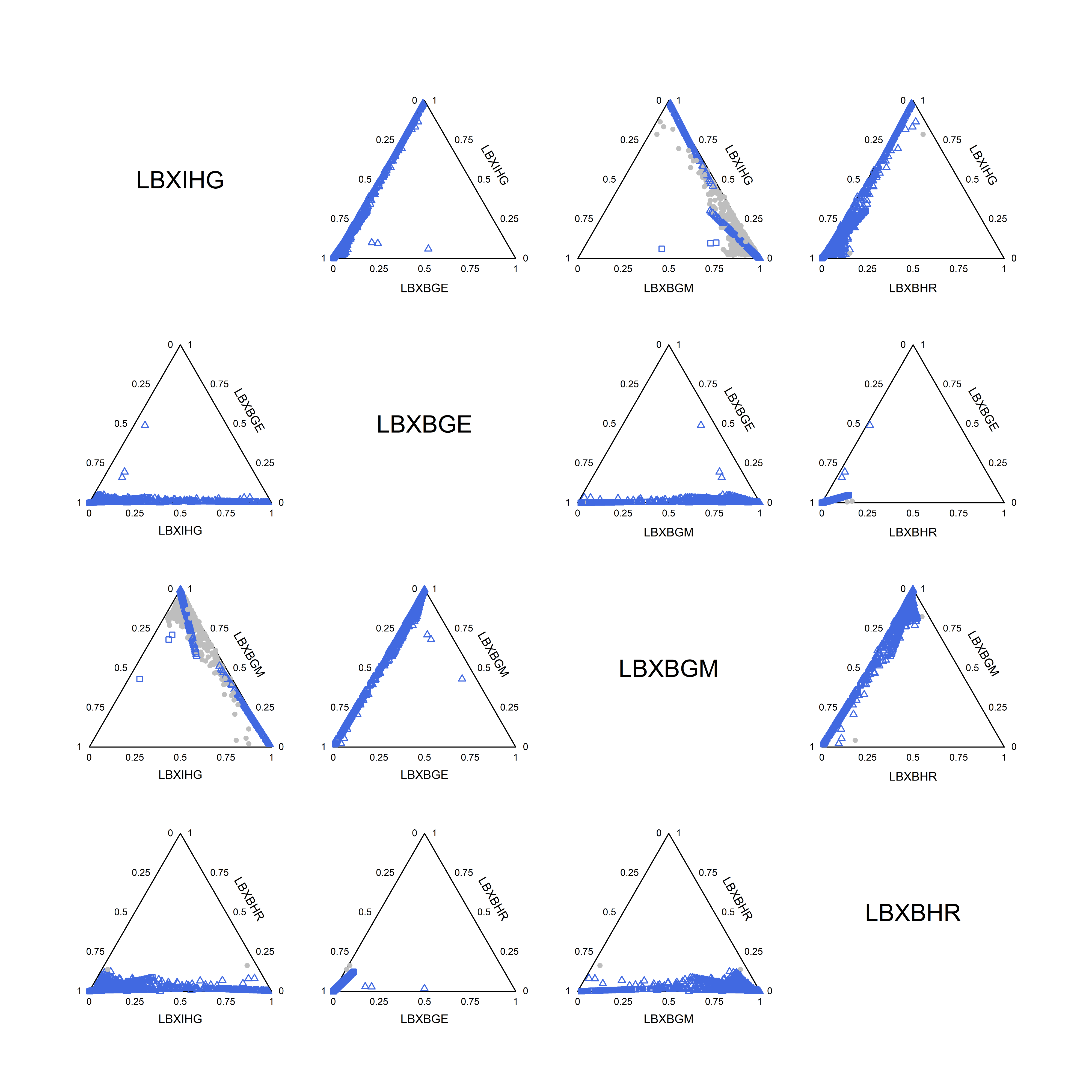}}
    \caption{Pairwise ternary plots of the types of mercury found in blood samples. The grey dots are the observed values. Blue triangles are points where one element was imputed, and blue squares are points that had two elements imputed. }
    \label{pair_plots}
\end{figure}

\begin{table}[H]
    \centering
        \begin{tabular}{lSSSS}
        \toprule
         & {LBXIHG} & {LBXBGE} & {LBXBGM} & {LBXBHR} \\
        \midrule
        $\bm{\alpha}$ & 0.8530 & 0.0805 & 6.3902 & 0.3438 \\
        Mean          & 0.1112 & 0.0105 & 0.8334 & 0.0448 \\
        \midrule
        \multicolumn{5}{c}{\text{Correlation matrix}}\\
        \midrule
        LBXIHG & 1.0000 & -0.0364 & -0.7914 & -0.0767 \\
        LBXBGE &        & 1.0000  & -0.2304 & -0.0223 \\
        LBXBGM &        &         & 1.0000  & -0.4846 \\
        LBXBHR &        &         &         & 1.0000  \\
        \bottomrule
        \end{tabular}
    \caption{Estimated parameters after fitting a Dirichlet distribution using the proposed algorithm in Section \ref{inference}. }
    \label{mle}
\end{table}

\begin{table}[H]
    \centering
        \begin{tabular}{lSSS}
        \toprule
         Distribution        & {Observed log-likelihood} & {AIC}       & {BIC} \\
        \midrule
         Multivariate normal & -573.3976                 & -1164.7952 & -1202.0152\\
         Dirichlet           & 344.4434                  &   680.8868 &  664.3445\\
        \bottomrule
        \end{tabular}
    \caption{Goodness-of-fit measures. The versions of AIC and BIC reported here are defined so that larger values indicate a better fit.}
    \label{gof}
\end{table}

\begin{table}[H]
    \centering
        \begin{tabular}{lSSS}
        \toprule
         Distribution                 & {Multivariate logistic normal} & {Dirichlet}\\
        \midrule
         Multivariate logistic normal & 0{*}                           & 14.20124 \\
         Dirichlet                    &  14.20124                      & 0{*} \\
        \bottomrule
        \end{tabular}
    \caption{Pairwise Wasserstein distances between the multivariate logistic normal and Dirichlet. \\
    * Theoretically, the Wasserstein distance between a distribution's mass and itself is zero.}
    \label{wasserstein}
\end{table}

\begin{figure}[H]
    \centering
    \includegraphics[width=\linewidth]{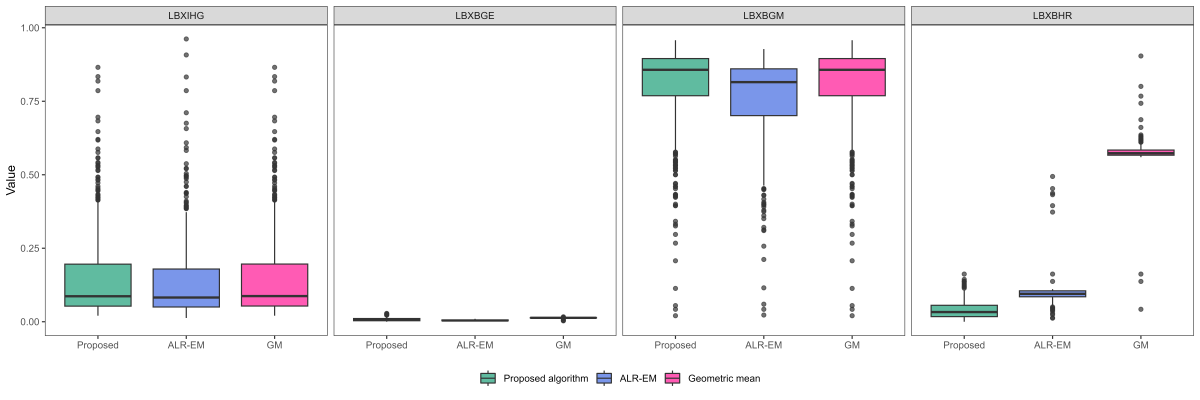}
    \caption{Marginal histograms of the imputed datasets for all four types of mercury.}
    \label{histograms}
\end{figure}

%\end{center}

%\bibitem[Mazza and Punzo(2020)]{convergence_aitken}
%Mazza, A. and Punzo, A. (2020).
%\newblock Mixtures of multivariate contaminated normal regression models.
%\newblock \emph{Statistical Papers} \textbf{61}(2), 787--822.

%\begin{description}
%\item[Title:]
%Brief description. (file type)
%\item[R-package for MYNEW routine:]
%R-package MYNEW containing code to perform the diagnostic methods
%described in the article. The package also contains all datasets used as
%examples in the article. (GNU zipped tar file)
%\item[HIV data set:]
%Data set used in the illustration of MYNEW method in
%Section~\ref{sec-verify} (.txt file).
%\end{description}

%\section{BibTeX}\label{bibtex}
%We encourage you to use BibTeX. If you have, please feel free to use the
%package natbib with any bibliography style you're comfortable with. The
%.bst file agsm has been included here for your convenience. 

\end{document}

%% file: Triangles.tex
\begin{figure}[!ht]
\centering
% --- Row 1 ---
%\resizebox{\textwidth}{!}{
\begin{subfigure}{0.48\textwidth}
\centering
\begin{tikzpicture}[ scale=2.5] 
    \coordinate (X1) at (90:1); 
    \coordinate (X2) at (210:1); 
    \coordinate (X3) at (330:1); 
    
    % Simplex 
    \draw[thick] (X1) -- (X2) -- (X3) -- cycle; 
    
    % Vertices Labels 
    \node[above] at (X1) {$x_1=1$}; 
    \node[below left] at (X2) {$x_2=1$}; 
    \node[below right] at (X3) {$x_3=1$}; 
    
    % Line x1 = 0.2 (Changed to dotted and added the label inline)
\coordinate (L1) at (barycentric cs:X1=0.2,X2=0.8,X3=0); 
\coordinate (L2) at (barycentric cs:X1=0.2,X2=0,X3=0.8); 
\draw[dashed, gray] (L1) -- (L2) 
    node[pos=0.18, sloped, fill=white, inner sep=1pt,text=black] {\scriptsize $x_1=0.2$};

    % Point at the intersection
    \coordinate (A) at (barycentric cs:X1=0.2,X2=0.4,X3=0.4); 
    \fill[blue] (A) circle (0.4pt); 
    
    % --- DOTTED CONSTRAINT LINES --- % 
    
    % x3 = 0.4 
    \coordinate (x3a1) at (barycentric cs:X1=0.6,X2=0,X3=0.4); 
    \coordinate (x3a2) at (barycentric cs:X1=0,X2=0.6,X3=0.4); 
    \draw[dashed, gray] (x3a1) -- (x3a2) 
        node[pos=0.83, sloped, fill=white, inner sep=1pt,text=black] {\scriptsize $x_3=0.4$}; 
    
    % x2 = 0.4 
    \coordinate (x2b1) at (barycentric cs:X1=0.6,X2=0.4,X3=0); 
    \coordinate (x2b2) at (barycentric cs:X1=0,X2=0.4,X3=0.6); 
    \draw[dashed, gray] (x2b1) -- (x2b2) 
        node[pos=0.19, sloped, fill=white, inner sep=1pt,text=black] {\scriptsize $x_2=0.4$}; 
    
\end{tikzpicture}

\caption{\label{fig:ex1}$\bx_1^{\top}=[0.2,0.4,0.4]$}
\end{subfigure}
\hfill
\begin{subfigure}{0.48\textwidth}
\centering
\begin{tikzpicture}[ scale=2.5] 
    \coordinate (X1) at (90:1); 
    \coordinate (X2) at (210:1); 
    \coordinate (X3) at (330:1); 
    
    % Simplex 
    \draw[thick] (X1) -- (X2) -- (X3) -- cycle; 
    
    % Vertices Labels 
    \node[above] at (X1) {$x_1=1$}; 
    \node[below left] at (X2) {$x_2=1$}; 
    \node[below right] at (X3) {$x_3=1$}; 
    
    % Line x1 = 0.2 
\coordinate (L1) at (barycentric cs:X1=0.2,X2=0.8,X3=0); 
\coordinate (L2) at (barycentric cs:X1=0.2,X2=0,X3=0.8); 
\draw[dashed, gray] (L1) -- (L2)
    node[pos=0.12, sloped, fill=white, inner sep=1pt, text=black]
    {\scriptsize $x_1=0.2$};
    % Feasible segment 
    \coordinate (A) at (barycentric cs:X1=0.2,X2=0.4,X3=0.4); 
    \coordinate (B) at (barycentric cs:X1=0.2,X2=0.3,X3=0.5); 
    \draw[line width=2pt, blue] (A) -- (B); 
    \fill[blue] (A) circle (0.4pt); 
    \fill[blue] (B) circle (0.4pt); 
    
    % --- DOTTED CONSTRAINT LINES --- % 
    
    % x2 = 0.3 (Updated from 0.35)
    \coordinate (x2a1) at (barycentric cs:X1=0.8,X2=0.2,X3=0); 
    \coordinate (x2a2) at (barycentric cs:X1=0,X2=0.2,X3=0.8); 
    \draw[dotted] (x2a1) -- (x2a2) 
        node[pos=0.15, sloped, fill=white, inner sep=1pt] {\scriptsize $x_2=0.2$}; 
    
    % x2 = 0.4 
    \coordinate (x2b1) at (barycentric cs:X1=0.6,X2=0.4,X3=0); 
    \coordinate (x2b2) at (barycentric cs:X1=0,X2=0.4,X3=0.6); 
    \draw[dotted] (x2b1) -- (x2b2) 
        node[pos=0.18, sloped, fill=white, inner sep=1pt] {\scriptsize $x_2=0.4$}; 
    
    % x3 = 0.4 
% x3 = 0.2 
\coordinate (x3a1) at (barycentric cs:X1=0.7,X2=0,X3=0.3); 
\coordinate (x3a2) at (barycentric cs:X1=0,X2=0.7,X3=0.3); 
\draw[dotted] (x3a1) -- (x3a2) 
    node[pos=0.85, sloped, fill=white, inner sep=1pt] {\scriptsize $x_3=0.3$}; 
    
    % x3 = 0.45 
    \coordinate (x3b1) at (barycentric cs:X1=0.5,X2=0,X3=0.5); 
    \coordinate (x3b2) at (barycentric cs:X1=0,X2=0.5,X3=0.5); 
    \draw[dotted] (x3b1) -- (x3b2) 
        node[pos=0.80, sloped, fill=white, inner sep=0.8pt] {\scriptsize $x_3=0.5$}; 
    
    % Feasible label 
   % \node[above right] at ($(A)!0.5!(B)$) {\scriptsize feasible}; 
\end{tikzpicture}
\caption{\label{fig:ex2}$\bx_2^{\top}=[0.2, [0.2, 0.4], [0.3, 0.5]]$}
\end{subfigure}

\vspace{0.5em}

\begin{subfigure}{0.48\textwidth}
\centering

\begin{tikzpicture}[ scale=2.5] 
    \coordinate (X1) at (90:1); 
    \coordinate (X2) at (210:1); 
    \coordinate (X3) at (330:1); 
    
    % Simplex 
    \draw[thick] (X1) -- (X2) -- (X3) -- cycle; 
    
    % Vertices Labels 
    \node[above] at (X1) {$x_1=1$}; 
    \node[below left] at (X2) {$x_2=1$}; 
    \node[below right] at (X3) {$x_3=1$}; 
    
    % Line x1 = 0.2 
    \coordinate (L1) at (barycentric cs:X1=0.2,X2=0.8,X3=0); 
    \coordinate (L2) at (barycentric cs:X1=0.2,X2=0,X3=0.8); 
\draw[dashed, gray] (L1) -- (L2)
    node[pos=0.32, sloped, fill=white, inner sep=1pt,text=black]
    {\scriptsize $x_1=0.2$};
    
    % Feasible segment 
    \coordinate (A) at (barycentric cs:X1=0.2,X2=0.4,X3=0.4); 
    \coordinate (B) at (barycentric cs:X1=0.2,X2=0.2,X3=0.6); 
    \draw[line width=2pt, blue] (A) -- (B); 
    \fill[blue] (A) circle (0.4pt); 
    \fill[blue] (B) circle (0.4pt); 
    
    % --- DOTTED CONSTRAINT LINES --- % 
    
    % x2 = 0.3 (Updated from 0.35)
    \coordinate (x2a1) at (barycentric cs:X1=0.8,X2=0.2,X3=0); 
    \coordinate (x2a2) at (barycentric cs:X1=0,X2=0.2,X3=0.8); 
    \draw[dotted] (x2a1) -- (x2a2) 
        node[pos=0.15, sloped, fill=white, inner sep=1pt] {\scriptsize $x_2=0.2$}; 
    
    % x2 = 0.4 
    \coordinate (x2b1) at (barycentric cs:X1=0.6,X2=0.4,X3=0); 
    \coordinate (x2b2) at (barycentric cs:X1=0,X2=0.4,X3=0.6); 
    \draw[dotted] (x2b1) -- (x2b2) 
        node[pos=0.18, sloped, fill=white, inner sep=1pt] {\scriptsize $x_2=0.4$};

    % Feasible label 
   % \node[above right] at ($(A)!0.5!(B)$) {\scriptsize feasible}; 
\end{tikzpicture}

\caption{\label{fig:ex3}$\bx_3^{\top}=[0.2, [0.2,0.4], \left[0,1\right]]$}
\end{subfigure}
\hfill
\begin{subfigure}{0.48\textwidth}
\centering

\begin{tikzpicture}[ scale=2.5]

\coordinate (X1) at (90:1);
\coordinate (X2) at (210:1);
\coordinate (X3) at (330:1);

% Simplex
\draw[thick] (X1) -- (X2) -- (X3) -- cycle;

% Labels
\node[above] at (X1) {$x_1=1$};
\node[below left] at (X2) {$x_2=1$};
\node[below right] at (X3) {$x_3=1$};

% Endpoints of x1 = 0.2 slice
\coordinate (A) at (barycentric cs:X1=0.2,X2=0.8,X3=0);
\coordinate (B) at (barycentric cs:X1=0.2,X2=0,X3=0.8);

% Highlight full feasible segment
\draw[line width=2pt, blue] (A) -- (B);
\fill[blue] (A) circle (0.4pt);
\fill[blue] (B) circle (0.4pt);

\node[above] at ($(A)!0.5!(B)$)
{\small $x_1=0.2$};

\end{tikzpicture}
\caption{\label{fig:ex4}$\bx_4^{\top}=[0.2,\left[0,1\right],\left[0,1\right]]$}
\end{subfigure}
%}
%\hfill
\caption{\label{fig:example}
Graphical representation on $\mathbb{S}_3$ of four compositions.
%with varying degrees of missing and censored components. 
The thick blue segment (or region) represents the set of feasible values consistent with the observed information and the compositional constraint.}
\end{figure}